\documentclass[10pt, conference, letterpaper]{IEEEtran}
\usepackage{amsmath}
\usepackage{amsthm}
\usepackage{amssymb}
\usepackage{verbatim}
\usepackage{graphicx}

\usepackage{tikz}
\usepackage{pgfplots}
\usetikzlibrary {positioning}
\usetikzlibrary{shapes,snakes}
\usetikzlibrary{arrows,calc}
\usepackage{relsize}
\usetikzlibrary{patterns}
 \usepackage{algorithm}
\usepackage{algpseudocode}

\theoremstyle{plain}
\newtheorem{lemma}{Lemma}
\newtheorem{rem}{Remark}
\newtheorem{theorem}{Theorem}
\newtheorem{assumption}{Assumption}

\newtheorem{corollary}{Corollary}
\theoremstyle{definition}
\newtheorem{example}{Example}
\floatname{algorithm}{Algorithm}

\allowdisplaybreaks[4]

\usepackage{amsfonts}
\usepackage{times}
\usepackage{latexsym}
\usepackage{amssymb}
\usepackage{amsmath}
\usepackage{cite}
\usepackage{verbatim}
%


\def\bb0{{\mathbb{0}}}


\def\bb{{\mathbf{b}}}

\def\bt{{\mathbf{t}}}

\def\b0{{\mathbf{0}}}


\def\b1{{\mathbf{1}}}


\def\bbE{{\mathbb{E}}}

\def\cA{\mathcal{A}}

\def\cI{\mathcal{I}}


\def\sfM{\mathsf{M}}


\def\sf0{{\mathsf{0}}}

\def\nn{\nonumber}

\newcommand{\marceau}[1]{\color{black}#1\color{black}}

\begin{document}

\newlength{\figurewidth}\setlength{\figurewidth}{0.6\columnwidth}


\addtolength{\topmargin}{-0.5\baselineskip}
\addtolength{\textheight}{\baselineskip}

\title{\fontsize{23}{23}\selectfont Online Knapsack Problem and Budgeted Truthful Bipartite Matching}

\newcounter{one}
\setcounter{one}{1}
\newcounter{two}
\setcounter{two}{2}

\addtolength{\floatsep}{-\baselineskip}
\addtolength{\dblfloatsep}{-\baselineskip}
\addtolength{\textfloatsep}{-\baselineskip}
\addtolength{\dbltextfloatsep}{-\baselineskip}
\addtolength{\abovedisplayskip}{-1ex}
\addtolength{\belowdisplayskip}{-1ex}
\addtolength{\abovedisplayshortskip}{-1ex}
\addtolength{\belowdisplayshortskip}{-1ex}

%
\author{\parbox{3 in}{ \centering Rahul Vaze\\
        School of Technology and Computer Science \\
        Tata Institute of Fundamental Research\\
      	Mumbai, India\\
        {\tt\small vaze@tcs.tifr.res.in}} 
		
}

\maketitle
\begin{abstract}
Two related online problems: knapsack and truthful bipartite matching are considered. For these two problems, the common theme is how to `match' an arriving left vertex in an online fashion with any of the available right vertices, if at all, so as to maximize the sum of the value of the matched edges, subject to satisfying a sum-weight constraint on the matched left vertices.  Assuming that the left vertices arrive in an uniformly random order (secretary model), two almost similar algorithms are proposed for the two problems, that are $2e$ competitive and $24$ competitive, respectively. The proposed online bipartite matching algorithm is also shown to be truthful: there is no incentive for any left vertex to misreport its bid/weight.
Direct applications of these problems include job allocation with load balancing, generalized adwords, crowdsourcing auctions, and matching wireless users to cooperative relays in device-to-device communication enabled cellular network.

\end{abstract}
\vspace{-0.14in}
\section{Introduction} \label{sec:intro}

In this paper, we consider two basic online combinatorial problems : knapsack and truthful bipartite matching,  that have wide applications in practice.
We first consider the {\it online} knapsack problem, where each item, that has two attributes : value and  weight, appears sequentially, and has to be accepted/rejected  irrevocably using only causal information, to maximize the total value of the selected items subject to the sum of their weights being less than the specified capacity. The knapack problem is a classical combinatorial problem, whose online version has also received considerable attention in the literature \cite{marchetti1995stochastic, han2015randomized, babaioff2007knapsack}, as it captures many of the modern resource allocation problems such as generalized adwords, job allocation in cloud computing, load balancing, cognitive radio, admission control and many others \cite{shi2014online, zhang2014dynamic, zheng2013coordinated, cello10generalized, zhang2009resource}. 

The second and related problem to the online knapsack problem is the truthful budgeted bipartite matching problem over a graph $G(L \cup R, E)$, where the right vertex set $R$ is known ahead of time, while left vertices of $L$ arrive sequentially. 
On the arrival of a left vertex $\ell$, utilities of all edges incident on it as well as its bid $c(\ell)$ are revealed. Any left vertex can be matched or accepted only if the payment made to it is larger than $c(\ell)$. With a total payment budget constraint of $C$, the problem is to decide which unmatched vertex of $R$ to match with $\ell$, if at all, immediately and irrevocably, so as to maximize the sum of the utility of all the matched/accepted edges. 
We assume that left vertices are strategic players, which could potentially manipulate the reporting of their true bid, and hence seek a truthful algorithm, i.e., no incoming vertex has any incentive to misreport its bid  to maximize its profit. 

The two problems are closely related, since knapsack problem can be modelled as a bipartite matching problem, 
where all edges incident on a left vertex have same utilities (value of the item) and the capacity constraint on sum-weight is equivalent to the payment budget constraint. Only the truthful aspect is different.

Important applications of the truthful budgeted bipartite matching problem are in crowdsourcing \cite{yang2012crowdsourcing, subramanian2015online} and device-to-device (D2D) cellular wireless communication. The crowdsourcing motivation is exemplified by modern cloud platforms such as Amazon's Mechanical Turk (MTRK),
ClickWorker (CLKWRKR), CrowdFlower (CRDFLWR) that has been well discussed in literature \cite{anari2014mechanism, yang2012crowdsourcing, subramanian2015online, goel2013matching}. In a D2D network, the basic idea is for {\it idle} nodes to help relay other nodes' data to/from the basestation or amongst themselves \cite{asadi2014survey, saad2008distributed}. Since relaying costs resources, each node demands a payment for its help, and the problem is to find an association/matching rule as to who should help whom \cite{gu2015matching} and also the payment to be made for helpers, subject to a total budget constraint on payment. 
To extract largest payment, each node can behave strategically, and hence there is a need for making this association/matching truthful.

To keep both the problems non-degenerate, similar to other prior related works on online algorithms \cite{babaioff2007knapsack, KorulaPal}, we consider a secretarial input model, where the order of arrival of items/left vertices is uniformly random, but their utilities and bids are allowed to be arbitrary. Under this model, we first consider an offline algorithm proposed in \cite{VazeMatching2016} that is useful for both problems, and then use the sample and price  idea to make the algorithms {\it online}. We also make a large market assumption, i.e.,  the utility of any one edge is small compared to the sum-utility of the optimal matching, that is commonly observed in practice for most problems of interest, and is widely used in auction literature \cite{iyer2014mean, gomes2014optimal, mihailescu2010economic}.

To quantify the performance of any online algorithm, we use the well established metric of {\it competitive ratio}, that measures the ratio of the profit of the online algorithm and the optimal offline algorithm (that has access to non-causal information). 
 
 We briefly discuss the prior work on both these problems. The online knapsack problem has been studied widely \cite{marchetti1995stochastic, han2015randomized, babaioff2007knapsack}, with the best known competitive ratio of $10 e$ reported in \cite{babaioff2007knapsack} for a randomized algorithm under the secretarial input. 
The truthful budgeted bipartite matching problem is a special case of a reverse auction \cite{myerson1981optimal}, where users submit bids for accomplishing a set of tasks and if selected, expect a payment at least as much as their reported bids. The offline version of the truthful matching problem, where the full graph is revealed ahead of time, has been considered in \cite{goel2013matching}, where a $3$-approximate algorithm has been derived  that is one-sided truthful. When the goal is to maximize the number of matched edges, \cite{Singer13} provides a $320$-competitive online truthful algorithm assuming the secretarial input model. Under large market assumption, the best known bound for the considered online problem is a  $24 \beta$-competitive  algorithm \cite{VazeMatching2016}, where $\beta$ is the ratio of the largest to the smallest utility of any edge.
Under some additional restrictions such as utilities of all edges incident on a right vertex are identical, a constant-competitive algorithm has been derived in \cite{ZhaoOnlineMatching2014}. 
Our contributions:
\begin{itemize} 
\item

Assuming a large market assumption and secretarial input, we propose a simple algorithm for the online knapsack problem, that is shown to be $2 e$ competitive. Compared to prior work \cite{babaioff2007knapsack}, enforcing the large market assumption, which is mostly satisfied in practice especially in networking problems, we are able to significantly improve the competitive ratio from $10e$ to $2e$. Moreover, our algorithm is also deterministic.

\item The second main contribution of this paper is a $24$-competitive online bipartite matching algorithm that is truthful and satisfies the payment budget constraint.  The previous best known result is a $24 \beta$-competitive algorithm \cite{VazeMatching2016} ($\beta$ is the ratio of the largest to the smallest utility of any edge). Since our algorithm has constant competitiveness, it is scalable and appealing for applications in large networks. 
\end{itemize}

\section{Online Knapsack Problem} Let the value and weight  of item $i \in \cI, |\cI| = n$, be $v(i)$ and $w(i)$, respectively, and the corresponding weight to value ratio (called the buck per bang in the paper) be $b(i) = \frac{w(i)}{v(i)}$. The weights and values (and buck per bang) are arbitrary and allowed to be selected by an adversary. The knapsack problem is to select the set of items that maximizes the sum of their values, subject to a constraint $C$ on the sum of the weight of the items in the selected set.
Thus, without loss of generality, let $w(i) \le C, \ \forall \ i$.

We consider the online knapsack problem, and to keep it non-degenerate in terms of competitive ratio, we assume that the order of arrival of items is uniformly random (secretary-model), i.e., each permutation over $n$ arriving items is equally likely. 
Let $\pi$ be a uniformly random permutation over $[1:n]$. Then the the $k^{th}$ item that 
arrives has value $v(\pi^{-1}(k))$, weight $w(\pi^{-1}(k))$, and buck per bang $b(\pi^{-1}(k))$. 
Under this model, we also assume that given two items arriving at locations $\pi(i)$ and $\pi(j)$, if $b(i) > b(j)$,  then $P(w(i)> w(j)) = \frac{1}{2}$ which is reasonable for most applications.

For a set $S$, we let $v(S)  = \sum_{s \in S} v(s)$.
For any online algorithm $A$ (where on arrival of item $i$, it has to be either accepted/rejected instantaneously and irrevocably), the competitive ratio for solving the knapsack problem is defined as 
$$ \mu_A = \min_{\cI}\frac{\bbE_{\pi}\left\{\sum_{s\in S_A}v(s)\right\}}{v(\mathsf{OPT}(C))},$$ where $\mathsf{OPT}(C)$ is the optimal offline set of selected items and $S_A$ is the set of items selected by $A$, with sum weight constraint $C$. The online knapsack problem is to find the best algorithm $A$ that maximizes the competitive ratio $\mu_A$. $A$ is said to be $\alpha > 1$ competitive if $\mu_A =1/ \alpha$.

We map the knapsack problem to a matching problem,\footnote{The degree of any left or right vertex can be at most $1$.} where we define a bipartite graph $G = (L \cup R, E)$ whose each left vertex $\ell \in L$ corresponds to item $\ell \in \cI$ ($|L| = |\cI|$), and the number of right vertices $|R| = |\cI|$, and edge set $E = \{e = (\ell,r): v(e) = v(\ell), \ \forall  \ r\in R\}$. Thus, each edge incident on left vertex $\ell$ has the same value. Finding the max-weight matching $\sfM$ in $G$ in an online manner, such that $\sum_{e = (\ell,r)\in \sfM} w(\ell) \le C$ is equivalent to solving the online knapsack problem, where on arrival of each left vertex it has to be matched or permanently left unmatched, instantaneously and irrevocably. From hereon, we entirely focus on finding an efficient bipartite online matching subject to capacity constraint $C$.

\begin{assumption}\label{ass:1} Let $v_{max} = \max_{e \in E} v(e)$, and $v(\mathsf{OPT}(C))$ be the optimal value of the matching under the capacity constraint.
We assume the typical large market assumption \cite{goel2013matching}, i.e., $\frac{v_{max}}{v(\mathsf{OPT}(C))} = o(1)$, thus, no single user can influence the outcome significantly.
\end{assumption}
Similar to buck per bang of left vertex, we define for each edge $e = (\ell, r)$ a buck per bang $b(e) = \frac{w(e)}{v(e)}$ that represents the \marceau{weight/cost per unit utility}. For any $\gamma$, let 
$G(\gamma)$ be the graph obtained by removing all edges $e \in E(G)$ with buck per bang $b(e) > \gamma$. Then the proposed online max-weight algorithm \textsc{ON} for solving the online knapsack problem is as given by Algorithm \ref{alg:on}.
\begin{algorithm}
\caption{$\mathsf{ON}$ Algorithm}\label{alg:on}
\begin{algorithmic}[1]
\State {\bf Input:} $L$ set of left vertices/users that arrive sequentially in order $\pi$, $R$ set of right vertices, Capacity $C$ 
\State \%Offline Phase 
\State $L_{t}$ = first $t$ left vertices of $L$
\State Run \textsc{\textsc{Threshold}} on $G_{t}= (L_{t} \cup R, E_{t})$ to obtain $\gamma_{t}\triangleq\gamma_C(G_{t})$ and matching $\sfM_{t}$
\For{each right vertex $r\in R$}
	\If{$e = (\ell,r) \in \sfM_{t}$}
	\State  Set  $\text{price}(r):=b(\ell)$, $\text{cost}(r):=w(\ell)$  
	\Else \State $\text{price}(r):=0$ $\text{cost}(r):=0$
	\EndIf
	\EndFor
	\State \%Decision Phase
\State$\sfM_{\mathsf{ON}} =\emptyset$ 
\State$R' = \{r \in R : \text{price}(r) > 0\}$.
\For{every new left vertex $\ell \in L \backslash L_{t}$},
	\If{$b(\ell) = \frac{w(\ell)}{v(\ell)}>\gamma_{t}$}
	\State \%Pruning:  Let $\ell$ be permanently unmatched 
	\State Break
	\Else	\State Let $e^\star=(\ell, r)$ be the edge with the smallest $\text{price}(r), r\in R'$ such that $b(\ell) < \text{price}(r)$ and $w(\ell) < \text{cost}(r)$
	 	\If{$\sfM_{\mathsf{ON}} \cup  \{e^{\star}\}$ is a matching } 
	\State $\sfM_{\mathsf{ON}} = \sfM_{\mathsf{ON}}  \cup \{e^\star\}$ 
	\Else 
	\State Let $\ell$ be permanently unmatched,
	\EndIf
	\EndIf
\EndFor
\end{algorithmic}
\end{algorithm}
The idea behind $\mathsf{ON}$ is as follows:
\begin{itemize}
\item Do not match any of the first $t$ left vertices (called the offline phase), and only use them to run the offline \textsc{Threshold} algorithm \cite{VazeMatching2016} and find the threshold $\gamma_{t}$ and the matching $\sfM_{t}$ with capacity $C$.
\item For any right vertex $r$, such that $e = ( *, r) \in \sfM_{t}$, set its  $\text{price}(r)$ and  $\text{cost}(r)$ to be the buck per bang and the weight of the left vertex matched to $r$ in $\sfM_{t}$, respectively. 
\item In the decision phase, starting with the arrival of $t+1^{st}$ left vertex, do not consider it for selection if its buck per bang $b(e)$ larger than $\gamma_{t}$. Otherwise, match the newly arrived left vertex $\ell$ to the available/unmatched right vertex $r$ with the smallest price that is larger than the buck per bang $b(\ell)$ of $\ell$ and has weight less than the cost of $r$. Thus the number of selected/matched left vertices is at most the number of left vertices matched by the \textsc{Threshold} algorithm in the offline phase.
\end{itemize}

Before proving results on $\mathsf{ON}$, we first consider the subroutine (\textsc{Threshold} algorithm \cite{VazeMatching2016}) that is used to generate an offline matching with the first $t$ left vertices, where the \textsc{Greedy} subroutine is the usual greedy matching algorithm for a bipartite graph. Essentially, the \textsc{Threshold} algorithm tries to find 
the largest threshold $\gamma_C$ such that the sum-weight of the edges that are part of the greedy matching on the edges with buck per bang less than the threshold, satisfies the capacity constraint. 
 
\begin{algorithm}
\caption{\textsc{Threshold}}\label{alg:unimech}
\begin{algorithmic}[1]
\State {\bf Input:} Graph $G$, Capacity $C$
\State {\bf Output:} Matching $\sfM$, Threshold $\gamma_C$
	\State$\cA(G) = \{\gamma : \sum_{e\in \sfM}\gamma v(e)\leq C,\; \sfM=\mbox{\textsc{Greedy}}(G(\gamma))\}$
	\State $\gamma_C=\max\{\gamma: \gamma \in \cA(G) \}$
\State Accept all users in $\sfM = \mbox{\textsc{Greedy}}(G(\gamma_C))$ 

\end{algorithmic}
\end{algorithm}

\begin{rem}\label{rem:threshold} The matching $\sfM$ output by \textsc{Threshold} algorithm for graph $G$ is a Greedy matching for graph $G(\gamma_C)$. Moreover, since all matched left vertices have $b(e) \le \gamma$, and from the definition of \textsc{Threshold} algorithm, $\gamma_C \sum_{e \in \sfM} v(e) \le C$, we have  $\sum_{\ell: e=(\ell,r) \in \sfM} w(\ell) \le C$, i.e., $\sfM$ satisfies the capacity constraint. 
\end{rem}

We next list some important properties of \textsc{Threshold} algorithm \cite{VazeMatching2016}, whose proofs are  presented in the Appendices for completeness sake.

\begin{lemma}\label{lem:umguarantee} \cite{VazeMatching2016} Let $\sfM(\textsf{off})$ be the matching output by \textsc{Threshold} algorithm with input graph $G$ under capacity constraint $C$. Then under Assumption \ref{ass:1}, $v(\sfM(\textsf{off})) \ge \frac{\mathsf{OPT}(C)}{3+o(1)}$.
\end{lemma}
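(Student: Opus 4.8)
\emph{Proof sketch I would attempt.} Write $a := v(\mathsf{OPT}(C))$; Assumption~\ref{ass:1} says exactly that $v_{\max} = o(a)$. I would use two clean ingredients. The first is the classical guarantee of \textsc{Greedy}: on any bipartite graph $H$, processing edges in non-increasing order of value, the returned matching has value at least $\tfrac12$ that of the maximum-weight matching of $H$ (the standard charging argument: each \textsc{Greedy} edge ``blocks'' at most two edges of a fixed optimal matching, none of larger value). The second is Remark~\ref{rem:threshold}: $\sfM(\textsf{off}) = \textsc{Greedy}(G(\gamma_C))$ is feasible and $\gamma_C\,v(\sfM(\textsf{off})) \le C$. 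I would then split the offline optimum by the threshold, setting $\mathsf{OPT}_{>} := \{e \in \mathsf{OPT}(C) : b(e) > \gamma_C\}$ and $\mathsf{OPT}_{\le} := \mathsf{OPT}(C) \setminus \mathsf{OPT}_{>}$.

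For $\mathsf{OPT}_{\le}$: it is a sub-matching of $\mathsf{OPT}(C)$ all of whose edges lie in $G(\gamma_C)$, so the \textsc{Greedy} guarantee gives $v(\mathsf{OPT}_{\le}) \le 2\,v(\sfM(\textsf{off}))$. For $\mathsf{OPT}_{>}$: since $b(e) > \gamma_C$ forces $v(e) < w(e)/\gamma_C$, a Markov-type estimate using that $\mathsf{OPT}(C)$ obeys the capacity gives $v(\mathsf{OPT}_{>}) < \tfrac{1}{\gamma_C}\sum_{e \in \mathsf{OPT}_{>}} w(e) \le C/\gamma_C$. If $\mathsf{OPT}_{>} = \emptyset$ we are already done with factor $2$; otherwise some edge of $\mathsf{OPT}(C)$ has buck-per-bang above $\gamma_C$, so $\gamma_C < \max_e b(e)$, i.e.\ the threshold is pinned down by the capacity and not by exhausting the edge set. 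The whole lemma then reduces to the single claim
\begin{equation}\label{eq:captight}
C/\gamma_C \;\le\; v(\sfM(\textsf{off})) + o(a),
\end{equation}
since then $a = v(\mathsf{OPT}_{\le}) + v(\mathsf{OPT}_{>}) < 2\,v(\sfM(\textsf{off})) + C/\gamma_C \le 3\,v(\sfM(\textsf{off})) + o(a)$, that is, $v(\sfM(\textsf{off})) \ge (a - o(a))/3 = a/(3+o(1))$.

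The real work is \eqref{eq:captight}, and this is the step I expect to be the main obstacle and the one that actually consumes Assumption~\ref{ass:1}. The idea is maximality of $\gamma_C = \max \cA(G)$: for every $\gamma > \gamma_C$ one has $\gamma\,v(\textsc{Greedy}(G(\gamma))) > C$. As $\gamma$ increases through $\gamma_C$ the graph $G(\gamma)$ is unchanged until the next buck-per-bang value is reached, at which point---after perturbing so that all buck-per-bang values are distinct, the general case following by a limiting argument---exactly one new edge $e'$ enters. A stability property of the greedy matching then applies: adding a single edge can raise $v(\textsc{Greedy}(\cdot))$ by at most the value of that edge (compare the two greedy runs edge by edge and use that later-processed edges have no larger value), hence by at most $v_{\max} = o(a)$. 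Feeding $v(\textsc{Greedy}(G(\gamma))) \le v(\sfM(\textsf{off})) + v_{\max}$ into $\gamma\,v(\textsc{Greedy}(G(\gamma))) > C$ and letting $\gamma \downarrow \gamma_C$ yields $\gamma_C\,(v(\sfM(\textsf{off})) + v_{\max}) \ge C$, which is \eqref{eq:captight}. I would expect the bookkeeping around the exact definition of $\gamma_C$ at buck-per-bang boundaries, and around tied buck-per-bang values, to be the fiddly part; the assembly in the previous paragraph is routine once \eqref{eq:captight} is in hand.
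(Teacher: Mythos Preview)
Your proof is correct and follows essentially the same route as the paper: the same split of $\mathsf{OPT}(C)$ into edges with $b(e)\le\gamma_C$ and $b(e)>\gamma_C$, the greedy $2$-approximation on $G(\gamma_C)$ for the first part, the $C/\gamma_C$ bound for the second, and the near-tightness inequality $C/\gamma_C \le v(\sfM(\textsf{off})) + v_{\max}$ coming from maximality of $\gamma_C$ together with Assumption~\ref{ass:1}. The paper states this last step more tersely (``the leftover capacity $C-\gamma_C\,v(\sfM)$ is no more than $\gamma_C v_{\max}$''), while you spell out the underlying stability-of-\textsc{Greedy} reasoning; the arguments are otherwise the same.
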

Lemma \ref{lem:umguarantee} is valid for all graphs, but if we restrict to a special class of graphs considered in this section, where values of all edges incident on any left vertex are identical and the number of right and left vertices are equal, we can get a better bound as a corollary to Lemma \ref{lem:umguarantee} as follows.
\begin{corollary}\label{cor:umguarantee}
Let $\sfM(\textsf{off})$ be the matching output by \textsc{Threshold} algorithm with input graph $G$ (where edge set $E = \{e= (\ell, r): v(e) = v(\ell)\}$) under capacity constraint $C$. Then under Assumption \ref{ass:1}, $v(\sfM(\textsf{off})) \ge \frac{\mathsf{OPT}(C)}{1+o(1)}$.
\end{corollary}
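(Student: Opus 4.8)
The plan is to re-enter the proof of Lemma~\ref{lem:umguarantee} and use the two structural features of this graph class---every edge incident on a left vertex $\ell$ carries the common value $v(\ell)$, and $|R|=|L|$---to remove both of the sources of slack that produce the factor $3$ in the general bound.

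\emph{What \textsc{Threshold} returns here.} Since every edge at $\ell$ has value $v(\ell)$, it has buck per bang exactly $b(\ell)=w(\ell)/v(\ell)$, so for every $\gamma$ the graph $G(\gamma)$ is the complete bipartite graph between all of $R$ and $L(\gamma):=\{\ell\in L:\,b(\ell)\le\gamma\}$. As $|R|=|L|\ge|L(\gamma)|$, \textsc{Greedy} matches \emph{every} vertex of $L(\gamma)$, hence $\textsc{Greedy}(G(\gamma))$ is already a maximum-weight matching with value $f(\gamma):=\sum_{\ell\in L(\gamma)}v(\ell)$. Thus the ``greedy versus optimal matching'' factor of $2$ present in the general argument disappears, and by Remark~\ref{rem:threshold}, $v(\sfM(\textsf{off}))=f(\gamma_C)$ where $\gamma_C=\max\{\gamma:\gamma f(\gamma)\le C\}$.

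\emph{Bounding the optimum.} Here $v(\mathsf{OPT}(C))$ is exactly the optimal value of a $0/1$ knapsack of capacity $C$, so by Assumption~\ref{ass:1} it is within $v_{max}=o(v(\mathsf{OPT}(C)))$ of the fractional-knapsack value, which loads items in non-decreasing order of buck per bang until the budget is exhausted. Letting $\gamma^\star$ be the buck per bang of the critical (last, possibly fractional) item, the items with $b(\ell)<\gamma^\star$ are taken fully and have total weight at most $C$, so, setting $F:=\sum_{\ell:\,b(\ell)<\gamma^\star}v(\ell)$,
\[
v(\mathsf{OPT}(C))\le F+v_{max}=F+o(v(\mathsf{OPT}(C))).
\]
It then remains to show $f(\gamma_C)\ge F-o(v(\mathsf{OPT}(C)))$.

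\emph{The crux: comparing $\gamma_C$ with $\gamma^\star$.} If $\gamma_C\ge\gamma^\star$ then $f(\gamma_C)\ge F$ and we are done. Otherwise $\gamma_C<\gamma^\star$, and since $\gamma\mapsto\gamma f(\gamma)$ is non-decreasing while $\gamma_C$ is maximal, the surrogate feasibility test is essentially tight, $\gamma_C f(\gamma_C)=C$. The items counted by $F$ but not by $f(\gamma_C)$ have $\gamma_C<b(\ell)<\gamma^\star$, so their total weight is at most $C$ and each obeys $v(\ell)=w(\ell)/b(\ell)<w(\ell)/\gamma_C$; hence their total value is at most $C/\gamma_C=f(\gamma_C)$. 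This crude estimate only yields $f(\gamma_C)\ge\tfrac12 F$, i.e.\ a constant factor, and squeezing it to $1+o(1)$ is the step I expect to be the real obstacle: one must invoke the large-market assumption once more to show that the window $(\gamma_C,\gamma^\star]$ can carry only $o(v(\mathsf{OPT}(C)))$ worth of value, using that no individual item exceeds $v_{max}=o(v(\mathsf{OPT}(C)))$ and that in this graph class the feasibility test loses at most one item's worth of slack at $\gamma_C$. Combining the three parts gives $v(\sfM(\textsf{off}))=f(\gamma_C)\ge F-o(v(\mathsf{OPT}(C)))\ge v(\mathsf{OPT}(C))(1-o(1))$, i.e.\ $v(\sfM(\textsf{off}))\ge v(\mathsf{OPT}(C))/(1+o(1))$.
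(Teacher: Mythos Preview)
The paper's proof is a two-line reduction to Lemma~\ref{lem:umguarantee}: it keeps the $\mathsf{OPT}^{+}/\mathsf{OPT}^{-}$ split unchanged, observes that in this graph class \textsc{Greedy} on $G(\gamma_C)$ is already an optimal matching (every $\ell\in L(\gamma_C)$ is matched since $|R|=|L|$), so $v(\sfM)\ge v(\mathsf{OPT}^{-})$ replaces the weaker $v(\sfM)\ge v(\mathsf{OPT}^{-})/2$, and then quotes the bound $v(\sfM)\ge v(\mathsf{OPT}^{+})$ verbatim from the proof of Lemma~\ref{lem:umguarantee}. Your fractional-knapsack route via the critical ratio $\gamma^{\star}$ is a more explicit unpacking of the same two pieces, with your $F$ playing the role of $v(\mathsf{OPT}^{-})$ and the ``window'' $(\gamma_C,\gamma^{\star}]$ the role of $\mathsf{OPT}^{+}$.

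The obstacle you isolate is genuine, and it bites earlier than you place it. Both the tightness claim $\gamma_C f(\gamma_C)=C$ and the bound $v(\mathsf{OPT})\le F+v_{max}$ implicitly assume no two items share a buck-per-bang value; with ties, $\gamma\mapsto\gamma f(\gamma)$ can jump over $C$ by $\Theta(\mathsf{OPT})$, and the ``critical item'' can be an entire block. Even when tightness does hold, your crude estimate gives only $f(\gamma_C)\ge F/2$, and Assumption~\ref{ass:1} does not squeeze this to $1+o(1)$: with $2C$ items of $(v,w)=(1,\epsilon)$ and $C$ items of $(v,w)=(1,1)$ and capacity $C$, one gets $\gamma_C=1/2$, $v(\sfM)=2C$, $\mathsf{OPT}\approx 3C$, while $v_{max}/\mathsf{OPT}=o(1)$. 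For what it is worth, the paper's own argument, read literally, combines $v(\sfM)\ge v(\mathsf{OPT}^{-})$ and $v(\sfM)\ge v(\mathsf{OPT}^{+})$ to obtain only $v(\mathsf{OPT})\le(2+o(1))\,v(\sfM)$; the passage to $1+o(1)$ is not supplied there either, so the difficulty you flag is not an artifact of your approach.
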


\begin{rem} Assumption \ref{ass:1} is critical in the sense that if it is violated, then the approximation ratio of the \textsc{Threshold} algorithm can be arbitrarily bad which can be showed as follows. Consider the case when there are only two items, $v(1) = 1, w(1) = 1$, and $v(2) = C-1, w(2) = C$, with capacity $C$. The optimal solution is to just choose item $2$ (assumption \ref{ass:1} is not satisfied since $v(\{2\})/v(\mathsf{OPT}(C)) = 1$), while the \textsc{Threshold} algorithm will choose item $1$ and the approximation ratio will be $1/C$. 
\end{rem}

Before analyzing the \textsc{ON} algorithm, we first consider the offline case, when \textsc{Threshold} is run over the full graph $G(L \cup R, E)$ and output threshold is $\gamma$ and matching is $\sfM(\text{off})$.
Recall that the edge weights of all edges incident on any left vertex are identical and the number of left and right vertices are equal. Hence the greedy matching $\sfM(\text{off})$ output by the \textsc{Threshold} 'offline' algorithm (when run on the full graph $G(L \cup R, E)$) contains all the left vertices that have buck per bang less than or equal to the threshold $\gamma$. Let the set of left vertices selected by the \textsc{threshold} algorithm be $L^\star$, i.e., set of left vertices with buck per bang less than $\gamma$. 
From Corollary \ref{cor:umguarantee}, we know that the utility of set $L^\star$ is almost optimal.

In the online case, we now  aim to select as many left vertices of $L^\star$, though without knowing $\gamma$ exactly, since \textsc{Threshold} cannot be run on the full graph $G$. Alternatively, we are trying to select as many left vertices that have buck per bang less than 
$\gamma$. 
This is reminiscent of the $k$-secretary problem, where the objective is to select the $k$ secretaries with the largest utilities in an online fashion. 

 Apart from the major challenge of finding $\gamma$, another minor problem is that  we do not know  how many secretaries we want to pick ahead of time. We overcome both these challenges via algorithm \textsc{ON}, where we first estimate a $\gamma_t \ge \gamma$ by running \textsc{Threshold} on a subgraph $G_t\subseteq G$ (graph consisting of the first $t$ left vertices of $G$), and then select as many left vertices that are matched/selected by running \textsc{Threshold} on graph $G_t$. We show that algorithm \textsc{ON} selects  any left vertex that is part of $L^\star$ with probability at least  $1/2e$.

%


We next state a critical lemma for analyzing the performance of the \textsc{ON} algorithm that shows that the $\gamma_t$ computed in the offline phase of \textsc{ON} is always larger than $\gamma$ (Lemma \ref{lem:monotoneGamma}), and hence all vertices of $G$ that are part of $\sfM(\text{off})$ are not pruned  in Step $17$ of the \textsc{ON} algorithm. 

\begin{lemma}\label{lem:monotonegreedymatching} \cite{VazeMatching2016}
Let $G = (L\cup R, E)$ and $F\subseteq G$, such that $F = (L\backslash L'\cup R, E')$, and the edge set $E'$ is such that all edges incident on left vertices in set $L'$ are removed simultaneously, while all edges incident on $L\backslash L'$ are retained as it is. Then $$v(\textsc{Greedy}(G)) \ge v(\textsc{Greedy}(F)).$$ Moreover $$v(\textsc{Greedy}(G(\gamma_1))) \ge v(\textsc{Greedy}(G(\gamma_2))) \ \text{for} \ \gamma_1 \ge \gamma_2,$$ and $$v(\textsc{Greedy}(G(\gamma))) \ge v(\textsc{Greedy}(F(\gamma))).$$
\end{lemma}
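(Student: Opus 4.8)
The plan is to derive all three parts of the lemma from the single statement that deleting a left vertex cannot increase the greedy value. Recall that $\textsc{Greedy}$ scans the edges in non-increasing order of value (ties broken by a fixed linear order on edges refining the value order, used consistently on $G$ and on every graph obtained from it), adding each scanned edge exactly when both its endpoints are still unmatched; equivalently, if $g=(x,y)$ is the maximum-value edge of $G$ then $\textsc{Greedy}(G)=\{g\}\cup\textsc{Greedy}(G\setminus\{x,y\})$, where $G\setminus\{x,y\}$ deletes the two endpoints of $g$ together with all edges meeting them. Given this, the three parts reduce as follows. For the third part, $F(\gamma)$ is obtained from $G$ by deleting the edges with $b(e)>\gamma$ and the edges meeting $L'$; performing these two deletions in the opposite order shows that $F(\gamma)$ is precisely ``$G(\gamma)$ with the left vertices of $L'$ deleted'', so the third part is the first part applied to $G(\gamma)$ in place of $G$. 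For the second part, in this section every edge $e=(\ell,r)$ at a fixed left vertex $\ell$ has $v(e)=v(\ell)$ and $w(e)=w(\ell)$, hence a common ratio $b(e)=b(\ell)$; therefore $G(\gamma)$ is simply $G$ with all left vertices of buck-per-bang exceeding $\gamma$ deleted, so for $\gamma_1\ge\gamma_2$ the graph $G(\gamma_2)$ is $G(\gamma_1)$ with the left vertices $\ell$ satisfying $\gamma_2<b(\ell)\le\gamma_1$ deleted, and again the second part follows from the first. Finally, deleting the vertices of $L'$ one at a time reduces the first part to the single-vertex case $v(\textsc{Greedy}(G))\ge v(\textsc{Greedy}(G-u))$.

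\textbf{The core induction.} I would prove, by strong induction on the number of edges of an arbitrary edge-weighted graph $H$, two statements simultaneously: \emph{(I)} $v(\textsc{Greedy}(H-q))\le v(\textsc{Greedy}(H))$ for every vertex $q$; and \emph{(II)} $v(\textsc{Greedy}(H))\le v(\textsc{Greedy}(H-q))+\delta_q(H)$, where $\delta_q(H)$ is the largest value of an edge of $H$ meeting $q$ (and $\delta_q(H)=0$ if $q$ has no incident edge). The edgeless base case is trivial. For the inductive step, let $g=(x,y)$ be the top edge of $H$, so $\textsc{Greedy}(H)=\{g\}\cup\textsc{Greedy}(H\setminus\{x,y\})$. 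If $q\notin\{x,y\}$, then $g$ survives in $H-q$ and is still its top edge; both greedy runs begin with $g$ and then recurse on $H\setminus\{x,y\}$ respectively $(H\setminus\{x,y\})-q$, which have fewer edges, so (I) and (II) for $H$ follow from (I) and (II) for $H\setminus\{x,y\}$ together with $\delta_q(H\setminus\{x,y\})\le\delta_q(H)$. If $q\in\{x,y\}$, say $q=y$, then $v(g)\le\delta_q(H)$ and $H\setminus\{x,y\}=(H-q)-x$. For (II): since $v(g)\le\delta_q(H)$, it suffices that $v(\textsc{Greedy}((H-q)-x))\le v(\textsc{Greedy}(H-q))$, which is (I) for $H-q$, a graph with strictly fewer edges (it has lost $g$). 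For (I): since $\textsc{Greedy}(H)=\{g\}\cup\textsc{Greedy}((H-q)-x)$, it suffices that $v(\textsc{Greedy}(H-q))-v(\textsc{Greedy}((H-q)-x))\le v(g)$, and this is (II) for $H-q$ and the vertex $x$, because $\delta_x(H-q)\le v(g)$, $g$ being the globally top edge. Every cross-reference is to a strictly smaller graph, so the joint induction is well-founded.

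\textbf{Where the difficulty lies.} The crux is that (I) cannot be proved by induction in isolation: precisely when the deleted vertex is an endpoint of the current top edge $g$, one must control the loss caused by deleting the \emph{other} endpoint of $g$, and the only available estimate for that loss is the quantitative statement (II); dually, the hard case of (II) feeds back into (I). So the real work is in choosing this pair of mutually supporting invariants and checking that each of the two cross-uses strips off at least one edge. A minor but genuine subtlety is that the recursive identity $\textsc{Greedy}(H)=\{g\}\cup\textsc{Greedy}(H\setminus\{x,y\})$ requires the tie-breaking among equal-value edges to agree on $H$ and on all its vertex-deleted subgraphs, which a single fixed global linear order on edges guarantees. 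I would also note explicitly that the second part genuinely relies on the structure of this section---equal value and weight on all edges at a left vertex---since that is what makes $G(\gamma_2)$ arise from $G(\gamma_1)$ by deleting whole left vertices rather than arbitrary edges.
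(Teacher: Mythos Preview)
Your joint induction on the pair of statements (I) and (II) is correct and is a cleaner, fully detailed argument than the paper's sketch. The paper instead asserts a per-right-vertex monotonicity (``for each right vertex, the value of the matched edge in \textsc{Greedy}$(G)$ is at least that in \textsc{Greedy}$(F)$'') and calls it standard; that is a stronger pointwise statement which indeed implies your global inequality (I), but your coupled induction gets the needed conclusion directly without ever having to track the matching vertex by vertex. The auxiliary bound (II) is exactly the right loaded invariant to close the case where the deleted vertex is an endpoint of the current top edge, and your well-foundedness check is sound.

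There is, however, a scope issue in your treatment of the second inequality. You reduce $v(\textsc{Greedy}(G(\gamma_1)))\ge v(\textsc{Greedy}(G(\gamma_2)))$ to the first part by arguing that $G(\gamma_2)$ is $G(\gamma_1)$ with some left vertices deleted, and you correctly flag that this relies on all edges at a fixed left vertex sharing a common buck-per-bang---which is true only in the knapsack section. The lemma, though, is stated for the general bipartite graphs of the paper (fixed cost $c(\ell)$ but arbitrary edge values $v(e)$), and it is needed in that generality: Lemma~\ref{lem:polytimecomplexity} invokes precisely this second inequality, and \textsc{Threshold} must be solvable in the truthful-matching section as well. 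In the general setting, passing from $G(\gamma_1)$ to $G(\gamma_2)$ removes, at each left vertex, only the lowest-value incident edges, not whole left vertices. This is what the paper is pointing to with its remark that ``edges are removed monotonically.''

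Your machinery does extend to cover this. Remove the doomed edges one at a time in increasing value order; then each removed edge $e=(\ell,r)$ is, at the moment of removal, the lowest-value edge at $\ell$ in the current graph $H$. If \textsc{Greedy}$(H)$ does not use $e$, the greedy matching is unchanged. If it does use $e$, then every higher-value edge at $\ell$ was already blocked, so after deleting $e$ the vertex $\ell$ has no usable edge and the resulting greedy matching coincides with \textsc{Greedy}$(H-\ell)$; now your statement (I) gives $v(\textsc{Greedy}(H-e))=v(\textsc{Greedy}(H-\ell))\le v(\textsc{Greedy}(H))$. Chaining these single-edge steps yields the second inequality in full generality. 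With this small addition your argument matches the paper's intended scope while remaining more explicit than the paper's own proof.
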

For arbitrary subgraph $F \subseteq  G$ (where any arbitrary edges are removed from $G$), $v(\text{\textsc{Greedy}}(G))$  may or may not be larger than $v(\text{\textsc{Greedy}}(F))$. 
The importance of Lemma \ref{lem:monotonegreedymatching} is in showing that \textsc{Threshold} is solvable in polynomial time and the threshold $\gamma_C$ is monotonic for classes of graphs considered in this paper. In particular, for the bipartite graphs considered in this paper, each left vertex has a fixed weight/cost and the buck-per-bang of edge $e = (\ell, r)$ is $b(e) = \frac{w(\ell)}{v(e)}$. Thus, if any edge $e = (\ell, r)$ has $b(e) > \gamma$, then all edges $e' = (\ell, *)$ for which their value $v(e') <  v(e)$  that are incident on the left vertex $\ell$ also have $b(e) > \gamma$ and are not part of graph $G(\gamma)$.
We prove the two claims as follows.

\begin{lemma}\label{lem:polytimecomplexity} \cite{VazeMatching2016}
\textsc{Threshold} is solvable in polynomial time.
\end{lemma}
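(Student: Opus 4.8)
The plan is to reduce the apparent continuum in the definition of $\gamma_C$ to a polynomial-size search. Although $\gamma$ ranges over all of $[0,\infty)$, the graph $G(\gamma)$ — obtained by deleting every edge with $b(e)>\gamma$ — changes only when $\gamma$ crosses one of the at most $|E|$ distinct values in $\{b(e):e\in E\}$. So the algorithm I have in mind is: first sort these distinct buck‑per‑bang values $b_{(1)}<\cdots<b_{(m)}$ with $m\le |E|$; then, on each maximal interval $I_i$ delimited by consecutive $b_{(j)}$'s (together with the two unbounded end intervals), $G(\gamma)$ equals a fixed graph, so compute once the matching $\sfM_i=\textsc{Greedy}(G(\gamma))$ for $\gamma\in I_i$ and its value $V_i=\sum_{e\in \sfM_i}v(e)$ — this is $O(|E|)$ calls to \textsc{Greedy}, each polynomial; finally, within $I_i$ the membership condition of $\cA(G)$ is the single linear inequality $\gamma V_i\le C$, whose solution set in $I_i$ is an interval found in $O(1)$, and $\gamma_C$ is the supremum of these solution sets over $i$, with output matching $\textsc{Greedy}(G(\gamma_C))$. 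Every step is polynomial, which is the claim.

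What makes this correct, and in particular makes $\gamma_C$ computable by a single sweep (or a binary search) rather than by inspecting disconnected feasible pockets, is the monotonicity clause of Lemma \ref{lem:monotonegreedymatching}: since $f(\gamma):=v(\textsc{Greedy}(G(\gamma)))$ is non‑decreasing in $\gamma$, the product $g(\gamma):=\gamma\,f(\gamma)$ is a product of two non‑negative non‑decreasing functions of $\gamma\ge 0$, hence non‑decreasing. Therefore $\cA(G)=\{\gamma\ge 0: g(\gamma)\le C\}$ is downward closed — if $\gamma'\le\gamma$ and $g(\gamma)\le C$ then $g(\gamma')=\gamma' f(\gamma')\le \gamma f(\gamma)\le C$ — so it is a single interval with left endpoint $0$ (note $g(0)=0\le C$) and right endpoint $\gamma_C$, which is finite whenever \textsc{Greedy} produces a nonempty matching because then $g(\gamma)\to\infty$. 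This is exactly the structure the last step exploits, and it is also what guarantees that the returned $\sfM$ is genuinely a \textsc{Greedy} matching of the graph at the chosen threshold, so that Remark \ref{rem:threshold} applies to the output.

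The hard part will be the bookkeeping exactly at the breakpoints. By the convention in the definition of $G(\gamma)$, an edge $e_0$ with $b(e_0)=\gamma_C$ belongs to $G(\gamma_C)$, so $f$ can jump at $\gamma_C$ and the supremum defining $\gamma_C$ need not be attained; one then has to decide whether to evaluate at $\gamma_C$ or at a value infinitesimally below it, equivalently whether the reported matching is $\sfM_{i^\star}$ or $\sfM_{i^\star+1}$, the point being that in either case $\gamma_C\sum_{e\in \sfM}v(e)\le C$, so the capacity guarantee of Remark \ref{rem:threshold} survives. Fixing once and for all a deterministic tie‑breaking rule for \textsc{Greedy} (so that each $\sfM_i$ is well defined) and a convention for the half‑open intervals $I_i$ removes the ambiguity, after which the remaining verification is routine. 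I expect this breakpoint bookkeeping, rather than any algorithmic idea, to be the only part that needs real care.
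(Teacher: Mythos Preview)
Your proof is correct and follows essentially the same route as the paper: both invoke the monotonicity clause of Lemma~\ref{lem:monotonegreedymatching} to conclude that $\cA(G)$ is a downward-closed interval, after which the paper simply says ``use bisection'' while you give the more explicit breakpoint enumeration over the $O(|E|)$ distinct buck-per-bang values. Your treatment of the discrete structure and the endpoint bookkeeping is more careful than the paper's one-line argument, but the key idea is identical.
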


Algorithm \textsc{Threshold} involves finding a maximum in Step 4. In the proof, it is shown that bisection can be used to solve this maximization. We would like to note that if $v(\textsc{Greedy}(G(\gamma))) \ngtr v(\textsc{Greedy}(F(\gamma)))$, then finding this maximum is non-trivial.

The following Lemma shows that if \textsc{Threshold} algorithm is run on a (special) subgraph of $G$, then the output threshold $\gamma_C$ increases, which we critically need to show that all left vertices that are part of $L^\star$ are eligible for matching in the \textsc{ON} algorithm.
\begin{lemma}\label{lem:monotoneGamma} \cite{VazeMatching2016}
Let $G = (L\cup R, E)$ and $F = (L\backslash L'\cup R, E')$, where the edge set $E'$ is such that all edges incident on left vertices in set $L'$ are removed simultaneously, while all edges incident on $L\backslash L'$ are retained as it is. Then 
$\gamma_C(F) \ge \gamma_C(G)$.
\end{lemma}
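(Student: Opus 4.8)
The plan is to deduce the monotonicity of the threshold $\gamma_C(\cdot)$ from the monotonicity of the greedy matching value already established in Lemma \ref{lem:monotonegreedymatching}. Recall that for any graph $H$, the feasible set is $\cA(H) = \{\gamma : \sum_{e \in \textsc{Greedy}(H(\gamma))} \gamma\, v(e) \le C\}$ and $\gamma_C(H) = \max\{\gamma : \gamma \in \cA(H)\}$. So it suffices to show $\cA(G) \subseteq \cA(F)$; taking the maximum over a larger set then gives $\gamma_C(F) \ge \gamma_C(G)$. (One should also note that $\cA(F)$ is nonempty — e.g.\ $\gamma = 0$ always lies in it since the empty/trivial matching has zero weight — so the maximum is well defined.)

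The key step is therefore: fix $\gamma \in \cA(G)$ and show $\gamma \in \cA(F)$, i.e.\ $\sum_{e \in \textsc{Greedy}(F(\gamma))} \gamma\, v(e) \le C$. Since $F$ is obtained from $G$ by removing all edges incident on the left-vertex set $L'$ simultaneously (and keeping every edge on $L \setminus L'$), the subgraph $F(\gamma)$ stands to $G(\gamma)$ exactly in the relation required by the third inequality of Lemma \ref{lem:monotonegreedymatching}, namely $v(\textsc{Greedy}(G(\gamma))) \ge v(\textsc{Greedy}(F(\gamma)))$. Hence $\sum_{e \in \textsc{Greedy}(F(\gamma))} v(e) \le \sum_{e \in \textsc{Greedy}(G(\gamma))} v(e)$, and multiplying by the nonnegative scalar $\gamma$ preserves the inequality. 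Combining with $\gamma \in \cA(G)$:
\[
\gamma \sum_{e \in \textsc{Greedy}(F(\gamma))} v(e) \;\le\; \gamma \sum_{e \in \textsc{Greedy}(G(\gamma))} v(e) \;\le\; C,
\]
so $\gamma \in \cA(F)$, as needed.

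The main obstacle — or rather the point that needs care — is to confirm that the edge-removal pattern defining $F$ is precisely the structured one to which Lemma \ref{lem:monotonegreedymatching} applies, so that the inequality $v(\textsc{Greedy}(G(\gamma))) \ge v(\textsc{Greedy}(F(\gamma)))$ is legitimate rather than the false general claim for arbitrary edge deletions; the hypotheses of the two lemmas are stated identically, so this is immediate, but it is the crux of why the argument works. A secondary subtlety is that $\gamma_C$ is defined via a supremum/maximum, so one should make sure the "$\le$" on the feasible sets genuinely transfers to "$\ge$" on the maxima; this is routine once $\cA(G) \subseteq \cA(F)$ and nonemptiness of $\cA(F)$ are in hand. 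No large-market assumption or probabilistic input model is needed here — the statement is purely combinatorial.
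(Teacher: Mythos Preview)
Your proposal is correct and follows essentially the same approach as the paper: both arguments invoke the third inequality of Lemma~\ref{lem:monotonegreedymatching} to conclude that any $\gamma$ feasible for $G$ is also feasible for $F$, and then take the maximum. The only cosmetic difference is that the paper checks this just for the single value $\gamma = \gamma_C(G)$ rather than proving the full containment $\cA(G) \subseteq \cA(F)$, but the underlying reasoning is identical.
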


Finally, we are ready to state the first main result of the paper on the expected utility of the online matching $\sfM_{\mathsf{ON}}$, output by the \textsc{On} algorithm.

\begin{theorem}\label{lem:onguarantee} $\bbE\{v(\sfM_{\mathsf{ON}})\} \ge \frac{v(\mathsf{OPT}(C))}{2 e(1+o(1)) }$.
\end{theorem}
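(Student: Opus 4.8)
The plan is to reduce the theorem in one stroke to a statement about a single left vertex, and then establish that statement by a secretary-type argument.

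\emph{The reduction.} I would first record three facts about $\mathsf{ON}$. (i) $\sfM_{\mathsf{ON}}$ is feasible for the capacity constraint: each accepted decision-phase vertex $\ell$ is assigned a distinct right vertex $r\in R'$ with $w(\ell)<\mathrm{cost}(r)=w(\ell_r)$, where $\ell_r$ is the $\sfM_t$-partner of $r$, so $\sum_{\ell\in\sfM_{\mathsf{ON}}}w(\ell)<\sum_{\ell_r\in\sfM_t}w(\ell_r)\le C$ by Remark~\ref{rem:threshold}. (ii) With $\gamma:=\gamma_C(G)$, Lemma~\ref{lem:monotoneGamma} applied to $F=G_t$ (which is $G$ with all edges on $L\setminus L_t$ removed) gives $\gamma_t\ge\gamma$, so every $\ell\in L^\star$, having $b(\ell)\le\gamma\le\gamma_t$, escapes the pruning of Step~17 whenever it arrives after position $t$. (iii) By Corollary~\ref{cor:umguarantee}, $v(L^\star)=v(\sfM(\mathsf{off}))\ge v(\mathsf{OPT}(C))/(1+o(1))$. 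Since the value of any edge on $\ell$ equals $v(\ell)$, we have $\bbE\{v(\sfM_{\mathsf{ON}})\}\ge\sum_{\ell\in L^\star}v(\ell)\Pr[\ell\in\sfM_{\mathsf{ON}}]$, and so it suffices to prove $\Pr[\ell^\star\in\sfM_{\mathsf{ON}}]\ge\frac{1}{2e}(1-o(1))$ for every fixed $\ell^\star\in L^\star$; this gives $\bbE\{v(\sfM_{\mathsf{ON}})\}\ge\frac{1-o(1)}{2e}v(L^\star)\ge\frac{v(\mathsf{OPT}(C))}{2e(1+o(1))}$.

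\emph{The per-vertex bound.} Fix $\ell^\star\in L^\star$ and write $b^\star=b(\ell^\star)$, $w^\star=w(\ell^\star)$. Call $\ell$ a \emph{dominator} of $\ell^\star$ if $b^\star<b(\ell)\le\gamma$ and $w^\star<w(\ell)$; every dominator lies in $L^\star$, and a dominator appearing among the first $t$ arrivals produces a right vertex $r_\ell\in R'$ with $\mathrm{price}(r_\ell)=b(\ell)>b^\star$ and $\mathrm{cost}(r_\ell)=w(\ell)>w^\star$, hence one to which $\ell^\star$ may legally be matched. I would then isolate a sufficient event for $\ell^\star$ to be accepted: $\ell^\star$ arrives at some position $j>t$, at least one dominator lands in the first $t$ positions, and when $\ell^\star$ arrives at least one right vertex generated by a sampled dominator is still free. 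Conditioning on the position $j$ of $\ell^\star$ and on the unordered set of the first $j-1$ arrivals, this is a classical-secretary event over the dominators together with $\ell^\star$ (reminiscent of $k$-secretary because several dominators may be sampled): the first of them to arrive is uniform over this set, and we want it to fall in the sample while $\ell^\star$ itself falls after $t$. Summing $\frac{t}{n}\sum_{j>t}\frac{1}{j-1}$ and taking $t\sim n/e$ yields the $1/e$; the extra factor $\frac12$ comes from the model's hypothesis that, among vertices with buck-per-bang in $(b^\star,\gamma]$, the weight ordering agrees with $\ell^\star$'s with probability $\frac12$ and independently of the arrival order, so only about half of those vertices are genuinely dominators — I would verify these coin flips are independent of the permutation so the two probabilities multiply.

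\emph{The main obstacle.} The real work is the availability clause: the right vertex created by a sampled dominator may be consumed earlier by a decision-phase vertex $\ell''$ with $b(\ell'')<\mathrm{price}$ and $w(\ell'')<\mathrm{cost}$, and one must show this blocking occurs only with controlled probability. Here the tie-breaking rule of $\mathsf{ON}$ — always take the cheapest-price available dominating right vertex — is essential: a dominator's price is at most $\gamma$, so any consuming $\ell''$ also has $b(\ell'')\le\gamma$ and hence lies in $L^\star$, and one can set up an exchange/charging argument in the spirit of the $k$-secretary analysis showing that a blocking of $\ell^\star$ forces a competing left vertex to have arrived correspondingly early — an event bounded just as in the single-secretary calculation. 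The large-market assumption, which forces $|L^\star|\to\infty$, is what keeps the concentration loss in this step to a $(1-o(1))$ factor. A minor point handled along the way is that $\gamma_t$ is random; restricting attention, as above, to buck-per-bang $\le\gamma$ — a deterministic lower bound for $\gamma_t$ — removes that dependence.
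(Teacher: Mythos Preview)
Your reduction paragraph is essentially what the paper does: feasibility via Remark~\ref{rem:threshold}, the inequality $\gamma_t\ge\gamma$ via Lemma~\ref{lem:monotoneGamma}, and the final comparison to $v(\mathsf{OPT}(C))$ via Corollary~\ref{cor:umguarantee}. So far so good.

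The per-vertex analysis, however, diverges from the paper in a way that leaves a real gap. The paper does \emph{not} try to track ``dominators'' or argue directly about whether a slot in $R'$ is still free when $\ell^\star$ arrives. Instead it decouples the two acceptance conditions. First it drops the weight condition $w(\ell)<\mathrm{cost}(r)$ entirely and shows that, with only the buck-per-bang test, $\mathsf{ON}$ accepts $\ell^\star$ whenever the \textsc{Virtual} $k$-secretary algorithm of \cite{babaioff2007knapsack} (Algorithm~\ref{alg:virtual}) would accept it; the latter accepts the element at position $i$ with probability exactly $t/(i-1)$, which sums to $\ge 1/e$ for $t=n/e$. The comparison ``$\textsc{Virtual}$ accepts $\Rightarrow$ $\mathsf{ON}$ accepts'' is the device that kills your availability obstacle: in \textsc{Virtual} the reference set is updated so that the currently largest price is always the buck-per-bang of an \emph{offline} sample precisely when \textsc{Virtual} accepts, and $\mathsf{ON}$ only needs \emph{some} unmatched $r\in R'$ with price above $b(\ell^\star)$, a weaker requirement. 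Second, the weight condition is reinstated at the very end as a single multiplicative $\tfrac12$: if $\ell^\star$ was matched (ignoring weights) to an $r$ with $\mathrm{price}(r)=b(j)>b(\ell^\star)$, the model assumption $\Pr(w(\ell^\star)<w(j)\mid b(j)>b(\ell^\star))=\tfrac12$ gives the factor directly.

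Your proposal instead bakes the weight condition into the definition of ``dominator'' and then has to confront availability head-on. The sentence ``one can set up an exchange/charging argument in the spirit of the $k$-secretary analysis'' is precisely the missing idea: you have not explained why, when $\ell^\star$ arrives at position $i$, the smallest-price rule of $\mathsf{ON}$ guarantees that a sampled dominator's slot is free with probability $t/(i-1)$. The blocking vertices $\ell''$ you describe may themselves be $L^\star$-vertices with smaller buck-per-bang than $\ell^\star$, so they legitimately compete for (and can exhaust) every slot whose price exceeds $b(\ell^\star)$; a naive count does not give the secretary bound. The paper's route through \textsc{Virtual} is exactly the clean way to control this, and is the piece you are missing. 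Separating out the weight condition as a final $\tfrac12$ (rather than thinning the set of dominators up front) is what makes the coupling go through without any further concentration or large-market argument at this step.
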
 
\begin{proof} 
Consider the full graph $G = (L \cup R, E)$ (offline) and its subset $G_t = (L_t \cup R, E_t)$ (offline for \textsc{ON}),  and let $\gamma$ and $\gamma_t$ be the output threshold when  \textsc{Threshold} is run over $G$ and $G_t$, both with capacity $C$, respectively.
From Lemma \ref{lem:monotoneGamma}, it follows that $\gamma_t \ge \gamma$, hence all the left vertices $L^\star$ matched by the \textsc{Threshold} algorithm with the full graph $G$ that arrive in the decision phase are not pruned in Step $17$ with the \textsc{On} algorithm.

 In the decision phase of the  \textsc{On} algorithm, disregard the condition that $w(\ell) < \text{cost}(r)$ for selecting a left vertex for now. Then the 
 left vertex $\ell \in L^\star$ that appears in the decision phase at the $i^{th}$ position, $i> t$, is selected 
as long as it is selected by the \textsc{Virtual} Algorithm \cite{babaioff2007knapsack}. This assertion follows since with the \textsc{Virtual} Algorithm, a left vertex in the decision phase is selected only if its buck per bang is lower than the currently largest price among the right vertices in the reference set $R'$, and more importantly that the current largest price  was derived from the buck per bang of a left vertex that arrived in the offline phase. With algorithm \textsc{On}, a left vertex in the decision phase is selected as long as there is at least one unmatched right vertex with price larger than its buck per bang. Thus, if any left vertex is selected by  \textsc{Virtual} Algorithm then it is definitely selected by the \textsc{On} algorithm. We illustrate the main difference between the \textsc{On} and the \textsc{Virtual} Algorithm via an example as follows.

\begin{algorithm}
\caption{Virtual Algorithm}\label{alg:virtual}
\begin{algorithmic}[1]
\State \%Offline Phase {\bf Input:} $L_t$,  $(\sfM_{t}, \gamma_t) = \textsc{Threshold}(L_t \cup R)$
\State $V = \{r: (\ell, r) \in \sfM_{t}\}$
\State $\text{price}(r) = b(e)$ for $e = (\ell, r) \in \sfM_{t}$
\State Order the elements of $V$ in increasing $\text{price}(r), r\in V$, the element with the largest price is $r_{|V|}$
\State Initialize $S = \Phi$
\State For every new left vertex $\ell\in L\backslash L_t$ in the decision phase
\If{$b(\ell) < \text{price}(r_{|V|})$}

\If  {$r_{|V|}$ was sampled in offline phase}
		\State  $S = S \cup \{\ell\}$
	\EndIf
	\State Update $\text{price}(r_{|V|}) = \gamma(\ell)$
	
\State Order the elements of $V$ in increasing $\text{price}(r), r\in V$
	
	\Else \ Do nothing and keep $\ell$ unmatched
\EndIf
\end{algorithmic}
\end{algorithm}

\begin{example} Consider the input graph $G$, where in the offline phase two left vertices that are matched/selected by the \textsc{Threshold} algorithm are $S = \{s_1, s_2\}$ with 
$\{b(s_1), \  b(s_2)\} = \{ 1/5, \ 1/6\}$. 
Let the left vertices $\ell_1, \ell_2$ (indexed in order of arrival) in the decision phase have 
$\{b(\ell_1), \ b(\ell_2)\} = \{1/5.1,\ 1/7 \}$, respectively.
Then with the \textsc{On} algorithm, on arrival of $\ell_1$ with $b(\ell_1)= 1/5.1$ it is compared with $s_1$ that has $b(s_1)=1/5$ and since $b(\ell_1) < b(s_1)$, $\ell_1$ is selected. Similarly, on arrival of $\ell_2$ with $b(\ell_2)= 1/7$ it is compared with $s_2$ (that has not been compared before and matched) that has $b(s_2)=1/6$, and $\ell_2$ is also selected.
With the \textsc{Virtual} algorithm, the offline matched set $\{b(s_1), \ b(s_2)\} = \{ 1/5, \ 1/6\}$ remains the same as in \textsc{On}. Moreover,  in the decision phase, on arrival of $\ell_1$ with $b(\ell)= 1/5.1$ it is compared with $s_1$ (with worst $b(.)$ value among the two), and since $b(\ell_1) < b(s_1)$, $\ell_1$ is selected. The main difference is in the next step, where the set  $V = \{s_1, s_2\}$  is updated to include $\ell_1$ and eject $s_1$ to get the reference set as $V = \{\ell_1, s_2\}$ with 
$\{b(\ell_1),\ b(s_2)\} =  \{ 1/5.1, \ 1/6\}$. Next, when $\ell_2$ arrives with $b(\ell)= 1/7$, even though it has better buck per bang than both $b(\ell_1)$ and $b(s_2)$, but since the maximum value of $b(.)$ among $\ell_1$ and $s_2$, $1/5.1$ is seen in the decision phase and not in the offline phase; $\ell_2$ is not selected.
\end{example}

From \cite{babaioff2007knapsack}, with the \textsc{Virtual} Algorithm, a new left vertex that appears at location $i$ is selected
 if and only if at location $i$, the left vertex with the largest buck per bang in the virtual set $V$  is sampled at or before time $t$. 
Since the permutations are uniformly random, the probability of this event is $\frac{t}{i-1}$. Hence the probability of selecting $\ell \in L^\star$ when it arrives at position $i \in [t+1, n]$ is 
 \begin{eqnarray} \nn
P(\ell\in L^\star \ \text{is selected}) &=& \sum_{i=t+1}^n \frac{1}{n} \frac{t}{i-1} = \frac{t}{n} \sum_{i=t+1}^n\frac{1}{i-1} \\
&>& \frac{t}{n} \int_t^n \frac{dx}{x} = \frac{t}{n} \ln\left(\frac{n}{t}\right), 
\end{eqnarray}
where the first equality follows since the probability of $\ell$ arriving at the $i^{th}$ location is $\frac{1}{n}$ independent of $i$. Choosing $t = \frac{n}{e}$, maximizes the lower bound, and we get that $P(\ell\in L^\star \ \text{is selected}) = 1/e$. 

Hence by linearity of expectation, we get that the expected value of the selected left vertices by 
\textsc{On} algorithm is at least 
 \begin{equation}\label{eq:exppayoff}
 \bbE\left\{v(\sfM_{\mathsf{ON}})\right\} \ge \sum_{\ell \in L^{\star}} \frac{1}{e} v(\ell)= \frac{1}{e} v(\sfM{\text{off}}).   
 \end{equation}

Now we enforce the condition that $w(\ell) < \text{cost}(r)$ for selecting a left vertex. We show in Lemma \ref{lem:budfeas} that selecting left vertices only when $w(\ell) < \text{cost}(r)$ implies that \textsc{On} algorithm satisfies the sum-weight constraint $C$.
Recall that we have assumed that under the secretarial model of input, given $b(i) > b(j)$, $P(w(i) > w(j)) =\frac{1}{2}$. 
Since each left vertex $\ell$ selected by \textsc{On} algorithm has 
$b(\ell) \le b(j)$ for some left vertex $j$ that is part of offline matching $\sfM_t$. 
Thus, each left vertex that belongs to $\sfM_{\textsf{ON}}$ without enforcing $w(\ell) < \text{cost}(r)$, is selected with probability $1/2$ even when the constraint is enforced, and we get from \eqref{eq:exppayoff}, that 
\begin{equation}\label{eq:m3m1}
\bbE\{v(\sfM_{\textsf{ON}})\}=\frac{1}{2e} v(\sfM{\text{off}}).
\end{equation}
Finally, the result follows since $v(\sfM{\text{off}}) > \frac{v(\mathsf{OPT})}{1+o(1)}$ from Corollary \ref{cor:umguarantee}.
\end{proof}

\begin{lemma}\label{lem:budfeas}
Algorithm $\mathsf{ON}$ satisfies the capacity constraint.
\end{lemma}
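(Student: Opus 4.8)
The plan is a straightforward charging argument: I would charge every edge of $\sfM_{\mathsf{ON}}$ to the single right vertex it is incident on, show that these charges are disjoint, and show that the charge of an edge $(\ell,r)$ is strictly less than $\text{cost}(r)$. Summing, the total weight of $\sfM_{\mathsf{ON}}$ is then bounded by $\sum_{r\in R'}\text{cost}(r)$, which I will identify with the total weight of the offline matching $\sfM_t$ and bound by $C$ using Remark~\ref{rem:threshold}.

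First I would record the structure of the offline phase. By construction of $\mathsf{ON}$, the reference set $R'=\{r:\text{price}(r)>0\}$ is exactly the set of right vertices saturated by the offline matching $\sfM_t$, and for each such $r$ the stored value $\text{cost}(r)$ equals $w(\ell')$, where $(\ell',r)\in\sfM_t$. Hence $\sum_{r\in R'}\text{cost}(r)=\sum_{(\ell',r)\in\sfM_t}w(\ell')$. Since $\gamma_t=\gamma_C(G_t)$ and $\sfM_t$ are produced by \textsc{Threshold} run on $G_t$ with capacity $C$, Remark~\ref{rem:threshold} applies verbatim to $\sfM_t$ and gives $\sum_{(\ell',r)\in\sfM_t}w(\ell')\le C$, so $\sum_{r\in R'}\text{cost}(r)\le C$.

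Next I would invoke the two decision-phase rules enforced by the algorithm. First, every edge added to $\sfM_{\mathsf{ON}}$ has the form $e^\star=(\ell,r)$ with $r\in R'$, and it is added only when $\sfM_{\mathsf{ON}}\cup\{e^\star\}$ remains a matching; therefore $\sfM_{\mathsf{ON}}$ is a matching all of whose right endpoints lie in $R'$, so each $r\in R'$ occurs in at most one edge of $\sfM_{\mathsf{ON}}$. Second, the selection step chooses $e^\star=(\ell,r)$ only when $w(\ell)<\text{cost}(r)$. Combining these,
\[
\sum_{(\ell,r)\in\sfM_{\mathsf{ON}}}w(\ell)\;<\;\sum_{(\ell,r)\in\sfM_{\mathsf{ON}}}\text{cost}(r)\;\le\;\sum_{r\in R'}\text{cost}(r)\;\le\;C,
\]
which is exactly the capacity feasibility claimed.

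I do not expect a real obstacle here: the argument rests only on (i) $\sfM_{\mathsf{ON}}$ being a matching, so the per-right-vertex charges are disjoint, (ii) the strict inequality $w(\ell)<\text{cost}(r)$ hard-coded into the selection step, and (iii) Remark~\ref{rem:threshold} for $\sfM_t$. The one point to state carefully is that the threshold and matching used for the charging are those computed with the \emph{true} capacity $C$ in the offline phase of $\mathsf{ON}$; the pruning by $\gamma_t$ in Step~17 plays no role in feasibility, only in eligibility. For completeness I would also note that if no admissible edge $e^\star$ exists when some $\ell$ arrives, that vertex is left unmatched, which can only decrease $\sum_{(\ell,r)\in\sfM_{\mathsf{ON}}}w(\ell)$ and cannot break the bound.
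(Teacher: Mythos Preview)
Your proposal is correct and follows essentially the same argument as the paper's proof: both use Remark~\ref{rem:threshold} to bound $\sum_{r\in R'}\text{cost}(r)\le C$, the selection rule $w(\ell)<\text{cost}(r)$, and the fact that each $r\in R'$ is matched at most once in $\sfM_{\mathsf{ON}}$, to conclude $\sum_{(\ell,r)\in\sfM_{\mathsf{ON}}}w(\ell)\le C$. Your write-up is somewhat more explicit in framing this as a charging argument and in noting that pruning by $\gamma_t$ plays no role in feasibility, but the substance is identical.
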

\begin{proof}  
Let $\gamma_{t} = \gamma_{C}(G_{t})$ for simplicity. For each $r \in R'$ (right vertices matched in the offline phase), from Remark \ref{rem:threshold} we have that for \textsc{Threshold} algorithm, $\sum_{r\in R'} \text{cost}(r) \le C$. In the decision phase, any left vertex is accepted (is matched to $r\in R'$) if its weight is less than the cost of $r\in R'$, and once $r\in R'$ is matched it is not available thereafter  
(at most $|R'|$ left vertices are selected). Therefore, it directly follows that for the set of matched left vertices in the decision phase $L_D$, $\sum_{\ell\in L_D} w(\ell) \le \sum_{r\in R'} \text{cost}(r)$. Since we know that $\sum_{r\in R'} \text{cost}(r) \le C$, the claim follows.
\end{proof}

{\it Discussion:} In this section, we proposed an online algorithm \textsc{on}  for the knapsack problem with competitive ratio $2 e$, improving upon the currently best known bound of $10 e$ \cite{babaioff2007knapsack}, under an extra large market assumption (Assumption \ref{ass:1}). Assumption \ref{ass:1} is reasonable for most networking applications and has been considered widely in auction  literature \cite{iyer2014mean, gomes2014optimal, mihailescu2010economic}. 
Assumption \ref{ass:1} is also satisfied if the value of items is generated according to a stochastic process that is light-tailed, which is what is generally observed in practice. 
In the next section, we build upon the \textsc{on} algorithm to propose a truthful algorithm for the online bipartite budgeted matching problem.

\section{Truthful Budgeted Bipartite Matching}
Motivated by crowdsourcing and D2D communication applications, in this section, we consider an online matching problem over a bipartite graph $G(L\cup R,E)$, where the right vertex set $R$ is known ahead of time, while left vertices of $L$ arrive sequentially in a random order. The incident edge utilities $v(e), e =(\ell,r), r\in R$ from a vertex $\ell\in L$ to set $R$ are revealed only upon its arrival, as well as its bid $c(\ell)$, and the problem is to decide which unmatched vertex of $R$ to match with $\ell$, if at all, immediately and irrevocably. If vertex $\ell$ is matched, a payment $p_{\ell}$ is made to vertex $\ell$ that has to be at least as much as its reported bid $c(\ell)$.  A total  budget constraint of $C$ is assumed for payments to be made to the matched left vertices. We assume that left vertices are strategic players, which could potentially manipulate the reporting of their true cost, and hence seek a truthful algorithm, i.e., no incoming vertex has incentive to misreport its bid. 
We continue to work under the secretarial model of input and the  large market assumption (Assumption \ref{ass:1}).


\begin{rem}As shown in \cite{yang2012crowdsourcing}, if bids of left vertices are used as payments, there is  incentive for left vertices to misreport their bids, and consequently the mechanism is not truthful or incentive compatible. Thus, the payment strategy is non-trivial. 
\end{rem} 



\begin{assumption}\label{ass:2} In the secretarial (uniformly random) left vertex arrival model, we also assume that for two different edges $e_1$ and $e_2$ with distinct left vertices $\ell_1$ and $\ell_2$ arriving at locations $\pi(1)$ and $\pi(2)$, if $v(e_1) > v(e_2)$, then $P(c(\ell_1) < c(\ell_2)) = 1/2$.
\end{assumption}

\begin{algorithm}
\caption{\textsc{ON-truth} Algorithm}\label{alg:msandp}
\begin{algorithmic}[1]
\State {\bf Input:} $L$ set of left vertices/users that arrive sequentially with permutation $\pi$, $R$ set of right vertices, Payment Budget $C$  
\State \%Offline Phase
\State $p = \frac{1}{2} $, $k \leftarrow Binomial(|L|, p)$
\State Let $L'$ be the first $k$ vertices of $L$
\State Run \textsc{\textsc{Threshold}} on $G'= (L' \cup R, E')$ to obtain $\gamma' \triangleq\gamma_C(G')$ and matching $\sfM_1$
\For{each right vertex $r: e = (\ell,r) \in \sfM_1$}
	\State Set $\text{reward}(r):=v(e)$ and $\text{cost}(r):=c(\ell)$ 
	\EndFor
	\For{each right vertex $r: (*,r) \notin \sfM_1$}
	\State Set $\text{reward}(r):=0$ and $\text{cost}(r):=0$
	\EndFor
	\State \%Decision Phase
\State$\sfM_{\mathsf{ON-T}} =\emptyset$ 
\For{every new left vertex $\ell \in L \backslash L'$},
	\State \%Pruning: Delete all edges $e = (\ell, r), r\in R$ s.t. $b(e)>\gamma'$
	\State Let $e^\star = (\ell, r)$ be the edge with the largest value such that $v(e^\star) \ge \text{reward}(r)\ 
	 \textbf{AND} \ c(\ell) \le \text{cost}(r) $	
	\If{$\sfM_{\mathsf{ON-T}} \cup  \{e^{\star}\}$ is a matching  }
		\State $\sfM_{\mathsf{ON-T}} = \sfM_{\mathsf{ON-T}}  \cup \{e^\star\}$ 
	\State Pay $p_\ell = \gamma' v(e^\star)$ to vertex $\ell$
	\Else 
	\State Let $\ell$ be permanently unmatched
	\EndIf
\EndFor
\end{algorithmic}
\end{algorithm}
To solve the online truthful budgeted matching problem we propose the 
 \textsc{ON-truth} algorithm that is almost identical to the $\mathsf{ON}$ algorithm in terms of when a left vertex is selected.  The first  difference is in size $t$ of the set of left vertices over which the offline algorithm \textsc{threshold} is run. With $\mathsf{ON}$, $t=n/e$, while with \textsc{ON-truth}, $t= Binomial(n,1/2)$.  The second difference in setting the reward for a right vertex that is part of the offline matching to be equal to the value of the matched edge, instead of the buck-per-bang as in $\mathsf{ON}$.
 A new feature with \textsc{ON-truth} is the payment rule for any selected left vertex, and the payment for left vertex $\ell$ of the selected edge $e^\star$ is $\gamma_C(G')v(e^\star)$.

We first compute the expected utility of matching $\sfM_{\mathsf{ON-T}}$ produced by algorithm \textsc{ON-truth} without enforcing the condition $c(\ell) \le \text{cost}(r)$ for selecting a left vertex on Line 17,  where the expectation is over the uniformly random left vertex arrival sequences.

\begin{algorithm}
\caption{$\mathsf{SIMULATE}$ Algorithm}\label{alg:simulate}
\begin{algorithmic}[1]
\State {\bf Input:} Graph $G$ and threshold $(\gamma)$ 
\State {\bf Output:} Matching $\sfM_{1s}, \sfM_{2s}$
\State Remove edges of $G$ with $b(e) > \gamma$ to get $G(\gamma)$
\State Sort edges of $G(\gamma)$ in decreasing order of their value
\State $\sfM_{1s} = \Phi, \sfM_{2s} = \Phi$
\State Mark each left vertex $\ell \in G(\gamma)$ as unassigned
\State For each edge $e = (\ell, r)$ in sorted order 
\If{$\ell$ is unassigned {\bf AND} $\sfM_{1s} \cup e$ is a matching} 
\State Mark $\ell$ as assigend 
\State Flip a coin with probability $p$ of heads
\State If heads, $\sfM_{1s} \leftarrow \sfM_{1s} \cup e$
\State else $\sfM_{2s} \leftarrow \sfM_{2s} \cup e$
\EndIf
\end{algorithmic}
\end{algorithm}
\vspace{-0.2in}
\begin{algorithm}
\caption{$\mathsf{SampleAndPermute}$ Algorithm}\label{alg:sp}
\begin{algorithmic}[1]
\State {\bf Input:} Graph $G = (L \cup R, E)$ 
\State {\bf Output:} Matching $\sfM_{2p}, \sfM_{3p}$
\State \%Offline Phase
\State $L' = \Phi$
\For{each $\ell \in L$}
\State With probability $\frac{1}{2}$, $L' \leftarrow L' \cup \ell$
\EndFor
\State $(\sfM_{1p}, \gamma') \rightarrow \textsc{Threshold}(G(L' \cup R, E(L') ))$
\For{each $r \in R$}
\State Set $\text{reward}(r)= v(e)$ if $e = (\ell, r) \in \sfM_{1p}$
\State Set $\text{reward}(r)= 0$ if $e = (*, r) \notin \sfM_{1p}$
\EndFor
\State \%Decision Phase
\State $\sfM_{2p} = \Phi, \sfM_{3p}= \Phi$
\For{each $\ell \in L \backslash L'$ and $b(e) \le \gamma'$} in random order
\State Let $e = (\ell, r)$ be the edge with largest value such that $v(e) \ge \text{reward}(r)$ 
\State Add $e$ to $\sfM_{2p}$.
\State If $\sfM_{3p} \cup e$ is a matching $\sfM_{3p} \leftarrow \sfM_{3p} \cup e$
\EndFor
\end{algorithmic}
\end{algorithm}

\begin{lemma}\label{lem:offguarantee} $\bbE\{v(\sfM_{\mathsf{ON-T}})\} \ge v(\mathsf{OPT}(C))/12$, when  condition $c(\ell) \le \text{cost}(r)$ is not enforced for selecting a left vertex in \textsc{ON-truth}.
\end{lemma}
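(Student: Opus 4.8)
The plan is to reduce the quantity to be bounded to the value of the matching $\sfM_{3p}$ produced by the auxiliary $\mathsf{SampleAndPermute}$ routine, and then to lower bound $\bbE\{v(\sfM_{3p})\}$ by chaining it --- through the multigraph $\sfM_{2p}$ and the random split produced by $\mathsf{SIMULATE}$ --- back to an offline \textsc{Threshold} matching whose value is controlled by Lemma~\ref{lem:umguarantee}. First, observe that drawing $k\sim Binomial(|L|,1/2)$ and taking the first $k$ arrivals in uniformly random order is the same as placing each left vertex into $L'$ independently with probability $1/2$; this is exactly the offline phase of $\mathsf{SampleAndPermute}$, and its decision phase coincides with that of \textsc{ON-truth} once the test $c(\ell)\le\text{cost}(r)$ on Line~17 is dropped. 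Hence $v(\sfM_{\mathsf{ON-T}})$ (without the cost test) and $v(\sfM_{3p})$ have the same distribution, and it suffices to show $\bbE\{v(\sfM_{3p})\}\ge v(\mathsf{OPT}(C))/12$.

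Next I fix the benchmark. Running \textsc{Threshold} on the full graph $G$ produces a threshold $\gamma$ and a matching $\sfM(\text{off})=\textsc{Greedy}(G(\gamma))$ (Remark~\ref{rem:threshold}) with $v(\sfM(\text{off}))\ge v(\mathsf{OPT}(C))/(3+o(1))$ by Lemma~\ref{lem:umguarantee}. Since $G'=(L'\cup R,E(L'))$ is obtained from $G$ by deleting, simultaneously, all edges incident on $L\setminus L'$, Lemma~\ref{lem:monotoneGamma} gives $\gamma'\triangleq\gamma_C(G')\ge\gamma$; therefore every edge of $\sfM(\text{off})$ has $b(e)\le\gamma\le\gamma'$ and is never pruned in the decision phase, and by monotonicity of greedy in the threshold (Lemma~\ref{lem:monotonegreedymatching}),
\[
v\big(\textsc{Greedy}(G(\gamma'))\big)\ \ge\ v\big(\textsc{Greedy}(G(\gamma))\big)\ =\ v(\sfM(\text{off}))\ \ge\ \frac{v(\mathsf{OPT}(C))}{3+o(1)}.
\]
Write $\sfG'\triangleq\textsc{Greedy}(G(\gamma'))$.

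The remaining task is to show the decision phase recovers, in expectation, at least a quarter of $v(\sfG')$. The first factor of $2$ comes from $\mathsf{SIMULATE}$ run on $(G,\gamma')$ with $p=1/2$: coupling its coin for each left vertex $\ell$ to the event $\{\ell\in L'\}$, we get $\sfM_{1s}\cup\sfM_{2s}=\sfG'$ with $\sfM_{2s}$ a genuine matching all of whose left vertices lie in $L\setminus L'$, so $\bbE\{v(\sfM_{2s})\}=\tfrac12\,v(\sfG')$ once $\gamma'$ is fixed. For the next inequality, note that if $(\ell,r)\in\sfM_{2s}$ then $\ell$ appears in the decision phase with $b((\ell,r))\le\gamma'$, and $\text{reward}(r)$ --- which equals $r$'s matched value in $\textsc{Greedy}(G'(\gamma'))$ --- is at most $r$'s matched value in $\textsc{Greedy}(G(\gamma'))=\sfG'$, because a right vertex's greedy partner (edges processed in decreasing value) can only weakly improve when left vertices are added; hence $v((\ell,r))\ge\text{reward}(r)$, so the maximum-value reward-beating edge that $\ell$ actually selects has value at least $v((\ell,r))$, and summing over the distinct $\ell$'s gives $v(\sfM_{2p})\ge v(\sfM_{2s})$. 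The second factor of $2$ is the passage from the multigraph $\sfM_{2p}$ to the matching $\sfM_{3p}$: since $L\setminus L'$ is processed in uniformly random order, a random-order/contention argument (of the type used with the \textsc{Virtual} algorithm in Theorem~\ref{lem:onguarantee}) yields $\bbE\{v(\sfM_{3p})\}\ge\tfrac12\,\bbE\{v(\sfM_{2p})\}$. Chaining the three bounds, $\bbE\{v(\sfM_{3p})\}\ge\tfrac12\bbE\{v(\sfM_{2p})\}\ge\tfrac12\bbE\{v(\sfM_{2s})\}=\tfrac14\,\bbE\{v(\sfG')\}\ge v(\mathsf{OPT}(C))/(12+o(1))$, which gives the claim in the large-market limit (Assumption~\ref{ass:1}).

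I expect the technical heart to be the last factor of $2$, the step $\sfM_{2p}\to\sfM_{3p}$: several decision vertices may select the same right vertex, so the contention at each right vertex must be bounded, which is exactly where it matters that $\text{reward}(r)$ was \emph{frozen} during the offline phase on an independent half of the input, making the competing decision vertices symmetric once $\gamma'$ and the arrival order are fixed. A secondary subtlety is that $\gamma'$ is itself a function of the random sample $L'$, so the $\mathsf{SIMULATE}$ coupling must be set up with the appropriate conditioning --- on everything except the sample-membership and arrival position of the vertex being charged --- exactly as in the analysis of Theorem~\ref{lem:onguarantee}.
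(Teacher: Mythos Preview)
Your overall architecture --- identify $\sfM_{\mathsf{ON-T}}$ with $\sfM_{3p}$, then peel off a factor $2$ via \textsc{Simulate}, a factor $2$ via the $\sfM_{2p}\to\sfM_{3p}$ contention step, and a factor $3$ via Lemma~\ref{lem:umguarantee} --- is exactly the paper's route. But the step you label ``secondary subtlety'' is in fact the crux, and your current argument does not survive it.

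Two concrete problems. First, your description of \textsc{Simulate} is wrong: in Algorithm~\ref{alg:simulate} an edge is eligible whenever its left endpoint is unassigned \emph{and} $\sfM_{1s}\cup e$ is a matching, so an edge sent to $\sfM_{2s}$ does \emph{not} block later edges at the same right vertex. Consequently $\sfM_{2s}$ is only a pseudo-matching and $\sfM_{1s}\cup\sfM_{2s}\neq\sfG'$; with the coin coupled to $\{\ell\in L'\}$ one has $\sfM_{1s}=\sfM_{1p}$ and $\sfM_{2s}=\sfM_{2p}$ (this is how the paper links \textsc{Simulate} to \textsc{SampleAndPermute}), not a split of $\sfG'$.

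Second, and more seriously, you run \textsc{Simulate} with threshold $\gamma'$ while coupling its coins to $L'$. Since $\gamma'$ is a deterministic function of $L'$, ``fixing $\gamma'$'' is partial conditioning on the coins, so Lemma~\ref{lem:KP} (which requires the input threshold to be \emph{independent} of the coin flips) does not apply, and your identity $\bbE\{v(\sfM_{2s})\}=\tfrac12\,v(\sfG')$ is unjustified. The paper flags exactly this obstruction (Remark after Lemma~\ref{lem:KP}) and resolves it by a decoupling trick: it introduces a second, independent coin vector and runs \textsc{Simulate} with the \emph{fixed} full-graph threshold $\gamma_f$ (which is independent of all coins), applies Lemma~\ref{lem:KP} there to get $\bbE\{v(\sfM_{2s}(G(\gamma_f)))\}=\bbE\{v(\sfM_{1s}(G(\gamma_f)))\}$, lower-bounds the latter via Lemma~\ref{lem:umguarantee} on a random half of $L$, and finally uses $\gamma_{\bt_2}\ge\gamma_f$ (Lemma~\ref{lem:monotoneGamma}) and a coupling of the two coin vectors on the relevant vertex set to pass from $\sfM_{2s}(G(\gamma_f))$ to $\sfM_{2s}(G(\gamma_{\bt_2}))=\sfM_{2p}$. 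Your sketch is missing this decoupling; without it the ``first factor of $2$'' has no valid derivation.
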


To prove the result, we work with two intermediate algorithms \textsc{Simulate} and \textsc{SampleandPermute}, that will help in lower bounding the utility of the matching $\sfM_{\mathsf{ON-T}}$ produced by \textsc{On-truth}, similar to \cite{KorulaPal}. 
The connection between \textsc{SampleandPermute} and the proposed algorithm \textsc{On-truth}, is that the output matching $\sfM_{3p}$ of \textsc{Sampleandpermute} and $\sfM_{\mathsf{ON-T}}$ produced by \textsc{On-truth} are almost identical, except for the  difference in defining the set $L'$ (set of left vertices used to generate the threshold $\gamma'$), without enforcing condition  $c(\ell) \le \text{cost}(r)$ in \textsc{On-truth}. But with both these definitions, a left vertex is selected to be part of $L'$ with probability $1/2$ independently. Consequently, the  utilities of matchings $\sfM_{3p}$ and $\sfM_{\mathsf{ON-T}}$ are identical in expectation.
So to lower bound the utility of $\sfM_{\mathsf{ON-T}}$, we find a lower bound on the utility of $\sfM_{3p}$ of \textsc{Sampleandpermute}, and focus entirely on \textsc{Sampleandpermute} algorithm as follows.

\textsc{Simulate} is an offline matching algorithm, where each edge in descending order of its value is either assigned to matching $\sfM_{1s}$ or pseudo matching $\sfM_{2s}$\footnote{$\sfM_{2s}$ is not a matching since in $\sfM_{2s}$ multiple edges can be incident on any right vertex.} depending on the coin toss for that edge. Important to notice is that for \textsc{Simulate}, once a coin is tossed for an edge making the left vertex assigned, no other coin is tossed for any edge that shares a common left vertex with it. So it is essentially identical to tossing a coin once for each left vertex instead of each individual edge as done in algorithm \textsc{Sampleandpermute}. Thus, whenever coin tosses are identical for  \textsc{Sampleandpermute} and \textsc{Simulate}, and the 
$\gamma'$ computed by \textsc{threshold} algorithm invoked inside \textsc{Sampleandpermute} is identical to the input $\gamma$ to \textsc{Simulate},  it is easy to see that the matching $\sfM_{1s} = \sfM_{1p} $ and pseudo matching $\sfM_{2s} = \sfM_{2p}$ produced by \textsc{Simulate} and \textsc{Sampleandpermute} \cite{KorulaPal}. 

\begin{lemma}[Lemma 2.3 \cite{KorulaPal}]\label{lem:KP} For \textsc{Simulate} algorithm, if the input threshold $\gamma$ and the coin tosses for choosing an edge (to be part of $\sfM_{1s}$ or $\sfM_{2s}$) are independent, then $\bbE\{v(\sfM_{2s})\} = 
\bbE\{v(\sfM_{1s})\}$.
\end{lemma}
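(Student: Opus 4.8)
The plan is to condition on the random threshold $\gamma$ and prove the stronger statement that, conditionally on $\gamma$, every edge of $G(\gamma)$ is equally likely to be placed in $\sfM_{1s}$ as in $\sfM_{2s}$; linearity of expectation then gives the identity conditionally, and averaging over $\gamma$ removes the conditioning. So fix a value of $\gamma$. Since $\gamma$ is independent of the coin tosses, conditioning on it leaves the tosses i.i.d.\ fair coins (heads-probability $p=\frac{1}{2}$), and it freezes the graph $G(\gamma)$ together with the order $e_1,e_2,\dots,e_m$ in which \textsc{Simulate} scans the edges (decreasing value, ties broken by a fixed rule). Write $X_j\in\{H,T\}$ for the coin tossed at the step that examines $e_j$, and let $R_j$ be the event that $e_j$ actually reaches that toss, i.e.\ that when $e_j=(\ell,r)$ is examined, $\ell$ is still unassigned and $\sfM_{1s}\cup\{e_j\}$ is still a matching.

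The crux is the claim that the indicator of $R_j$ is a deterministic function of $X_1,\dots,X_{j-1}$ alone, hence independent of $X_j$ (and of all later tosses). I would establish this by induction on $j$: the entire state of \textsc{Simulate} just before $e_j$ is examined, namely the set of assigned left vertices and the partial matching $\sfM_{1s}$, is determined by, for each earlier edge $e_{j'}$, whether $e_{j'}$ reached its toss and, if so, the outcome $X_{j'}$; by the inductive hypothesis each such ``$e_{j'}$ reached its toss'' event depends only on $X_1,\dots,X_{j'-1}$, so the state before step $j$, and therefore whether $R_j$ occurs, is determined by $X_1,\dots,X_{j-1}$. Since the tosses are independent, $X_j$ is independent of $R_j$. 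Because \textsc{Simulate} never removes an edge from $\sfM_{1s}$ or $\sfM_{2s}$ and marks $\ell$ as assigned exactly when $e_j$ reaches its toss, we have $\{e_j\in\sfM_{1s}\}=R_j\cap\{X_j=H\}$ and $\{e_j\in\sfM_{2s}\}=R_j\cap\{X_j=T\}$. Hence
\[
P(e_j\in\sfM_{1s}\mid\gamma)=p\,P(R_j\mid\gamma)=(1-p)\,P(R_j\mid\gamma)=P(e_j\in\sfM_{2s}\mid\gamma),
\]
using $p=\frac{1}{2}$.

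Finally, by linearity of expectation over the $m$ scanned edges, $\bbE\{v(\sfM_{1s})\mid\gamma\}=\sum_{j=1}^m v(e_j)\,P(e_j\in\sfM_{1s}\mid\gamma)=\sum_{j=1}^m v(e_j)\,P(e_j\in\sfM_{2s}\mid\gamma)=\bbE\{v(\sfM_{2s})\mid\gamma\}$, and taking the expectation over $\gamma$ yields $\bbE\{v(\sfM_{1s})\}=\bbE\{v(\sfM_{2s})\}$. I expect the inductive measurability claim for $R_j$ to be the only delicate point; the remaining steps are bookkeeping, though it is worth checking explicitly that the scan order, including the tie-breaking rule, is a function of $\gamma$ alone so that it is indeed frozen once we condition on $\gamma$.
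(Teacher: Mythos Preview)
The paper does not supply its own proof of this lemma; it is quoted verbatim from \cite{KorulaPal} (their Lemma~2.3) and used as a black box. Your argument is correct and is essentially the standard symmetry argument one finds there: once $\gamma$ is frozen, whether \textsc{Simulate} flips a coin at step $j$ is determined by the outcomes of the earlier flips only, so $X_j$ is independent of $R_j$, and fairness of the coin gives $P(e_j\in\sfM_{1s})=P(e_j\in\sfM_{2s})$ edge by edge; linearity then finishes.

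One small presentational point worth tightening: since the algorithm only tosses a coin when $R_j$ occurs, defining $X_j$ as ``the coin tossed at the step that examines $e_j$'' and then asking whether that toss is reached is slightly circular. The clean fix, which you are implicitly using, is to pre-sample an i.i.d.\ sequence $X_1,\dots,X_m$ of fair bits indexed by the scan positions and let \textsc{Simulate} consult $X_j$ only on the event $R_j$; this coupling makes your measurability induction for $R_j$ with respect to $\sigma(X_1,\dots,X_{j-1})$ literally true. Your closing remark that the scan order (including tie-breaking) must be a function of $\gamma$ alone is exactly the right hygiene check, and it holds because edge values are data of $G$ and $\gamma$ only prunes edges.
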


\begin{rem} Lemma \ref{lem:offguarantee} would be directly provable following the techniques of \cite{KorulaPal}, if Lemma \ref{lem:KP} could be applied on the matching $\sfM_{1s}$ and $\sfM_{2s}$, for the case when $\sfM_{1s} = \sfM_{1p} $ and pseudo matchings $\sfM_{2s} = \sfM_{2p}$.  Problem is that $\sfM_{1s} = \sfM_{1p} $ and pseudo matchings $\sfM_{2s} = \sfM_{2p}$ only when the respective coin tosses in \textsc{Simulate} and \textsc{Sampleandpermute}, and the
$\gamma$ (input to \textsc{Simulate}) and  $\gamma'$ (computed by \textsc{threshold} algorithm invoked inside \textsc{Sampleandpermute}) are identical. Since $\gamma'$ is dependent on coin tosses of \textsc{Sampleandpermute}, so if $\gamma'$ in input to \textsc{Simulate} and the coin tosses are identical to as in \textsc{Sampleandpermute}, they are dependent on each other, and Lemma \ref{lem:KP} is not applicable.
\end{rem}

So the proof of Lemma \ref{lem:offguarantee} is more involved as presented next.

\begin{proof} 
Consider the full graph $G = (L \cup R, E)$ (offline),  and let $\gamma_f$ be the output threshold when   \textsc{Threshold} is run over the full graph $G$.  

Toss $2|L|$ coins independently with heads probability $1/2$, and record their outcomes in two vectors $\bt_1 = [t_{11} \dots t_{1|L|}]$ and $\bt_2 = [t_{21} \dots t_{2|L|}]$, where $t_{ij} = 1$ if the $(i,j)^{th}$ coin toss is heads, and $0$ otherwise.  We will associate $\bt_1$ with coin tosses for the $|L|$ left vertices while running \textsc{simulate} with full graph $G$ and threshold $\gamma_f$.

All left vertices for which $t_{2j} = 1$ (set $L'$ as defined in \textsc{SampleAndPermute}), will be part of the offline phase and the remaining left vertices with $t_{2j} = 0$ will be part of decision/online phase in algorithm  \textsc{SampleAndPermute}.
Let $\gamma'_{\bt_2}$ be the output threshold from $\textsc{Threshold}$ when executed inside the algorithm in \textsc{SampleAndPermute} with coin toss vector $\bt_2$, which is identical to running  algorithm $\textsc{Threshold}$ on graph
$(G(L' \cup R, E'))$. 
\begin{rem} Note that for fixed coin tosses $\bt_2$ that determines $\gamma'_{\bt_2}$ completely, the matchings 
$\sfM_{1p}, \sfM_{2p}$ produced by \textsc{SampleAndPermute} are identical to the matchings $\sfM_{1s}, \sfM_{2s}$ produced by \textsc{Simulate} with input $G(L\cup R, E)$ and threshold $\gamma'_{\bt_2}$, and coin tosses $\bt_2$, respectively. Hence \begin{equation}\label{eq:dum90}
\bbE\{v(\sfM_{2p}(G(\gamma_{\bt})))\}=\bbE\{v(\sfM_{2s}(G(\gamma_{\bt})))\}\footnote{$\sfM(G(\gamma))$ is the matching obtained with graph $G$ and threshold $\gamma$.}
\end{equation}
\end{rem}

For fixed realizations of coin tosses $\bt_1$ and $\bt_2$, now we compare the pseudo matchings $\sfM_{2s}$ produced by algorithm \textsc{simulate} with input graph $G = (L \cup R, E)$, threshold $\gamma_f$ with coin tosses $\bt_1$, and input graph $G(L' \cup R, E')$ and threshold $\gamma_{\bt_2}$ with coin tosses $\bt_2$, respectively. 

Let the set of left vertices that have at least one edge in $G(\gamma_f)$ and in $G(\gamma_{\bt_2})$ be $L_1$ and $L_2$, respectively. Recall that $G(\gamma)$ contains all edges $e \in G$ that have $b(e) \le \gamma$. From Lemma \ref{lem:monotonegreedymatching}, we know that any choice of $\bt_2$, $\gamma_{\bt_2} \ge \gamma_f$. Hence $L_1 \subseteq L_2$. 
Consider the case when the realization of coin tosses $\bt_1$ and $\bt_2$ restricted to set $L_1$ of left vertices  be the same.
Then independent of the coin tosses for vertices $L_2 \backslash L_1$, we have that for algorithm \textsc{Simulate}
\begin{equation}\label{eq:dum1}
v(\sfM_{2s}(G(\gamma_{\bt_2}))) \ge v(\sfM_{2s}(G(\gamma_f))),
\end{equation} since $\gamma_{\bt} \ge \gamma_f$ and each edge present in $G(\gamma_f)$ is also present in $G(\gamma_{\bt_2})$ and $\sfM_{2s}$ is a pseudo matching and accepts multiple edges incident on any right vertex. For pseudo matching we let $v(\sfM_{2s}(G(\gamma))) = \sum_{e \in \sfM_{2s}(G(\gamma))} v(e)$. Thus, taking expectation of \eqref{eq:dum1}, we have that 
\begin{equation}\label{eq:dum2}
\bbE\{v(\sfM_{2s}(G(\gamma_{\bt_2})))\} \ge \bbE\left\{v(\sfM_{2s}(G(\gamma_f)))\right\}.
\end{equation}

Since $\gamma_f$ (obtained by running \textsc{Threshold} on full graph $G$) does not depend on any coin tosses $\bt_1$ or $\bt_2$, we have from Lemma \ref{lem:KP}, 
\begin{equation}\label{eq:dum3}\bbE\{v(\sfM_{2s}(G(\gamma_f)))\} = 
\bbE\{v(\sfM_{1s}(G(\gamma_f)))\},\end{equation} where the expectation is over the coin tosses $\bt_1$.

Next, we lower bound the $\bbE\{v(\sfM_{1s}(G(\gamma_f)))\}$.
Consider graph $G$ and threshold $\gamma_f$ as an input to the \textsc{simulate} algorithm. For a fixed realization of $\bt_1$, let $\mathsf{OPT}_{1/2}(\bt_1)$ be the optimal matching considering only left vertices $j$ for which coin tosses $t_{1j}=1$. Taking the expectation with respect to $\bt_1$,  we have that 
\begin{equation}\label{eq:dum10}
\bbE\{v(\mathsf{OPT}_{1/2})\} = \frac{\bbE\{v(\mathsf{OPT})\}}{2}.
\end{equation}

Moreover, as pointed out earlier in Remark \ref{rem:threshold}, the matching produced by \textsc{threshold} with output threshold $\hat {\gamma}$ is equivalent to finding a greedy matching with graph $G(\hat {\gamma})$. The same is true for matching $\sfM_{1s}$ produced by \textsc{simulate}. Thus, considering \textsc{simulate} algorithm with input $G$ and threshold $\gamma$, and all left vertices with $t_{1j}=1$ with $\bt_1$,  
 from Lemma  
 \ref{lem:umguarantee}, 
 $v(\sfM_{1s}(G(\gamma)) \ge \frac{v(\mathsf{OPT}_{1/2}(\bt_1))}{3}$. 
Thus,  taking the expectation with respect to $\bt_1$, from \eqref{eq:dum10}, we get 
\begin{equation}\label{eq:dum4}
\bbE\{v(\sfM_{1s}(G(\gamma)))\} = \frac{v(\mathsf{OPT}(C))}{6}.
\end{equation}

From Lemma 2.5  \cite{KorulaPal}, we have that for the \textsc{SampleandPermute} algorithm, 
\begin{equation}\label{eq:dum5}\bbE\{v(\sfM_{3p}(G(\gamma_{\bt})))\} \ge \frac{\bbE\{v(\sfM_{2p}(G(\gamma_{\bt})))\}}{2} ,\end{equation} since the pruning step to obtain $\sfM_{3p}$ from $\sfM_{2p}$ in \textsc{SampleandPermute} algorithm only depends on the relative order in which vertices (with coin tosses $t_{2j} =0$) arrive in the decision phase, and not on coin tosses themselves. 


Combining  \eqref{eq:dum2}, \eqref{eq:dum3}, \eqref{eq:dum4}, \eqref{eq:dum5}, and \eqref{eq:dum90}
 we get that 
 $$\bbE\{v(\sfM_3(\gamma_{\bt}))\} \ge \frac{v(\mathsf{OPT}(C))}{12}.$$
Since the expected utility of matching $\sfM_3(\gamma_{\bt_2})$ of \textsc{Sampleandpermute} is same as the expected utility of matching $\sfM_{\textsf{ON-T}}$ (the output of \textsc{ON-truth}), we have the result.
\end{proof}

The following theorem is the second main result of the paper.
\begin{theorem}
Algorithm \textsc{ON-truth} is $24$-competitive, satisfies the payment budget constraint, payment is always larger than the bid  for each selected left vertex, i.e., $p_{\ell}\ge c(\ell)$, and is truthful.\end{theorem}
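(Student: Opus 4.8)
The plan is to dispatch the four assertions separately, reusing Lemma~\ref{lem:offguarantee} for the utility, the feasibility part of Remark~\ref{rem:threshold} for the budget, the pruning step of \textsc{ON-truth} for the bid inequality, and Assumption~\ref{ass:2} for the competitive ratio. \textbf{Competitive ratio.} I would begin from Lemma~\ref{lem:offguarantee}, which already gives $\bbE\{v(\sfM_{\mathsf{ON-T}})\}\ge v(\mathsf{OPT}(C))/12$ when the condition $c(\ell)\le\text{cost}(r)$ on Line~16 is ignored. To reinstate that condition, observe that every edge $e^{\star}=(\ell,r)$ the unconstrained rule would add has $v(e^{\star})\ge\text{reward}(r)=v(e'_r)$, where $e'_r=(\ell'_r,r)\in\sfM_1$ is the edge matched to $r$ in the offline phase and $\text{cost}(r)=c(\ell'_r)$. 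Since $v(e^{\star})\ge v(e'_r)$ and $\ell\neq\ell'_r$, Assumption~\ref{ass:2} gives $P(c(\ell)\le\text{cost}(r))\ge\tfrac12$, and this event is independent of whether the unconstrained rule keeps $\ell$ (which depends only on the relative arrival order, symmetric under swapping the two bids). Linearity of expectation then shows that enforcing the condition costs at most a further factor $2$, so $\bbE\{v(\sfM_{\mathsf{ON-T}})\}\ge v(\mathsf{OPT}(C))/24$, i.e. \textsc{ON-truth} is $24$-competitive --- the exact analogue of the last step in the proof of Theorem~\ref{lem:onguarantee}.

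\textbf{Bid inequality and budget.} If $\ell$ is matched via $e^{\star}=(\ell,r)$, the pruning step retains $e^{\star}$ only when $b(e^{\star})=c(\ell)/v(e^{\star})\le\gamma'$, hence $p_\ell=\gamma' v(e^{\star})\ge c(\ell)$, which also gives individual rationality. For the budget I would first note that the left vertices matched in the decision phase are matched to \emph{distinct} right vertices of $R'=\{r:\text{reward}(r)>0\}$, i.e. to right vertices appearing in $\sfM_1$. The plan is then to bound, for each such $r$, the payment charged on its account by the amount $\gamma'\,\text{reward}(r)$ that $r$ contributes to $\sfM_1$ (unfolding $b(e'_r)\le\gamma'$ and the selection rule), and to sum: the total payment is at most $\gamma'\sum_{r\in R'}\text{reward}(r)=\gamma' v(\sfM_1)\le C$, the last inequality being Remark~\ref{rem:threshold} applied to $\sfM_1=\textsc{Threshold}(G')$ with capacity $C$. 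This is the value/payment counterpart of Lemma~\ref{lem:budfeas}, and the per-vertex bound is where the bookkeeping has to be done carefully.

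\textbf{Truthfulness.} Here I would invoke the Myerson characterisation for single-parameter reverse auctions: it suffices to prove (i) monotonicity --- a left vertex matched when reporting $c(\ell)$ stays matched for any smaller report, with everything else fixed --- and (ii) that the payment $p_\ell$ equals the critical bid, hence is independent of $\ell$'s own report. The structure of \textsc{ON-truth} makes the set-up convenient: whether $\ell$ lands in the offline set $L'$ or in the decision phase, and the quantities $\gamma'$, $\text{reward}(\cdot)$, $\text{cost}(\cdot)$, are determined by the coin tosses/arrival order and by the \emph{other} left vertices, never by $c(\ell)$; and if $\ell\in L'$ it is never matched, so it has no profitable deviation there. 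For $\ell$ in the decision phase matched via $e^{\star}$ with $V^{\star}:=v(e^{\star})$, the key claim I would establish is that the selected edge $e^{\star}$, and hence $p_\ell=\gamma' V^{\star}$, does not change as $\ell$ shades its bid below $\gamma' V^{\star}$ --- intuitively because once $c(\ell)\le\gamma' V^{\star}$ every edge incident on $\ell$ of value $\ge V^{\star}$ already clears the pruning threshold, so no lower bid can unlock a strictly better edge for $\ell$ --- while for $c(\ell)>\gamma' V^{\star}$ the vertex is dropped. Granting this, $\ell$ is matched exactly when $c(\ell)\le\gamma' V^{\star}=p_\ell$ and is then paid precisely that threshold, so no misreport improves $p_\ell-c(\ell)$.

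\textbf{Main obstacle.} The competitive-ratio and bid-inequality parts are short given Lemma~\ref{lem:offguarantee} and the pruning step. The real work is in the truthfulness argument --- establishing that $p_\ell=\gamma' v(e^{\star})$ is genuinely independent of $\ell$'s report, i.e. that a left vertex cannot manoeuvre into a higher-value (hence higher-paying) matched edge by shading its bid --- and, secondarily, in the per-right-vertex payment bound underlying budget feasibility; these are the two steps where I would be most careful, and interactions between the "largest-value" tie-break on Line~16 and the bid-dependent feasibility condition are the likely source of any technical friction.
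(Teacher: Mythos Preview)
Your competitive-ratio, bid-inequality, and truthfulness arguments follow the paper's route almost verbatim: Lemma~\ref{lem:offguarantee} together with Assumption~\ref{ass:2} for the factor $24$; the pruning inequality $b(e^\star)\le\gamma'$ for $p_\ell\ge c(\ell)$; and Myerson's characterisation (monotonicity plus critical payment) for truthfulness, exactly as in Lemma~\ref{lem:icExp}. You are in fact more cautious than the paper about whether shading the bid can move $\ell$ to a higher-value edge; the paper's Lemma~\ref{lem:icExp} simply records monotonicity and then shows that a bid above $\gamma' v(e^\star)$ prunes every edge from $\ell$ to the currently unmatched right vertices, so $\gamma' v(e^\star)$ is critical.

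The one place your plan diverges, and where it has a genuine gap, is budget feasibility. You propose to bound the payment at each matched $r$ by $\gamma'\,\text{reward}(r)$ and then sum to $\gamma'\, v(\sfM_1)\le C$. But the selection rule on Line~16 forces $v(e^\star)\ge\text{reward}(r)$, so $p_\ell=\gamma' v(e^\star)\ge\gamma'\,\text{reward}(r)$: the inequality points the wrong way, and ``unfolding $b(e'_r)\le\gamma'$'' only gives $\text{cost}(r)\le\gamma'\,\text{reward}(r)$, which again lower-bounds rather than upper-bounds the payment. The paper (Lemma~\ref{lem:budfeasExp}) takes a different route entirely: it uses the \emph{other} clause of Line~16, namely $c(\ell)\le\text{cost}(r)$, and argues exactly as in Lemma~\ref{lem:budfeas} that summing over the distinct matched right vertices yields $\sum_{\ell}c(\ell)\le\sum_{r\in R'}\text{cost}(r)\le C$, the last step by Remark~\ref{rem:threshold}. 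So the fix is to abandon the reward/value route for the budget bound and work through the cost clause instead; this is also why the condition $c(\ell)\le\text{cost}(r)$ is present in the algorithm at all.
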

\begin{proof} 
Disregarding the condition $c(\ell) < \text{cost}(r)$, an edge $e$ incident on a left vertex $\ell$ is chosen by algorithm \textsc{ON-truth} if its value $v(e)$ is larger than the reward (value of an edge of a right vertex that is part of the offline matching). Hence from Assumption \ref{ass:2}, if a left vertex is accepted without the condition $c(\ell) < \text{cost}(r)$, then it is still accepted with probability $\frac{1}{2}$ while enforcing the condition $c(\ell) < \text{cost}(r)$.
Combining this fact with Lemma  \ref{lem:offguarantee}, we get the $24 $-competitiveness of \textsc{ON-truth} algorithm.

The claim that $p_{\ell} \ge c(\ell)$ for each selected left vertex $\ell$, follows from the fact that each left vertex is considered in the decision phase only if its buck per bang $b(e) = \frac{c(\ell)}{v(\ell)} < \gamma$. Since $p_{\ell} = \gamma v(\ell)$, clearly, $p_{\ell}\ge c(\ell)$.
The budget feasibility and incentive compatibility are shown in Lemma \ref{lem:budfeasExp} and \ref{lem:icExp}, respectively.
\end{proof}
\begin{lemma}\label{lem:budfeasExp}
Algorithm \textsc{ON-truth} satisfies the payment budget constraint.
\end{lemma}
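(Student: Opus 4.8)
The plan is to pin down the total payment exactly and bound it against the budget guarantee that \textsc{Threshold} already provides offline. Every left vertex accepted by \textsc{ON-truth} is matched, in the decision phase, to a distinct right vertex $r$ and paid $p_\ell=\gamma'\,v(e^\star)$ for its matched edge $e^\star=(\ell,r)$, so the total disbursement equals $\gamma'\,v(\sfM_{\mathsf{ON-T}})$. The check $c(\ell)\le\text{cost}(r)$ on Line~17, together with each offline-matched right vertex being re-used at most once, does give $\sum_\ell c(\ell)\le\sum_r\text{cost}(r)\le C$ exactly as in Lemma~\ref{lem:budfeas}, but since the payments $p_\ell$ strictly exceed the bids $c(\ell)$ this does not finish the argument, so I would instead compare $\gamma'\,v(\sfM_{\mathsf{ON-T}})$ with $\gamma'\,v(\sfM_1)$ directly. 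By Remark~\ref{rem:threshold} applied to the graph $G'$ on which \textsc{Threshold} is run in the offline phase, $\gamma'\,v(\sfM_1)=\gamma'\sum_{e\in\sfM_1}v(e)\le C$ holds for \emph{every} realisation of the offline/decision split; hence it suffices to prove $\bbE\{\gamma'\,v(\sfM_{\mathsf{ON-T}})\}\le(1+o(1))\,\bbE\{\gamma'\,v(\sfM_1)\}$.

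For this I would first observe that the edge $e^\star$ chosen for a decision vertex $\ell$ is a fixed function of $\ell$'s incident edge values, its bid, and the (fixed) rewards and costs; hence $\sfM_{\mathsf{ON-T}}$ is a submatching of the pseudo-matching $\{e^\star_\ell:\ell\text{ in the decision phase}\}$, and since dropping the $c(\ell)\le\text{cost}(r)$ constraint only enlarges the feasible edge set of each $\ell$, $v(\sfM_{\mathsf{ON-T}})\le v(\sfM_{2p})$, the no-constraint pseudo-matching of \textsc{SampleAndPermute} (with $L'$ the offline set, which coincides in distribution with the split of \textsc{ON-truth}, each left vertex being offline independently with probability $1/2$). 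Then I would couple the coin tosses of \textsc{SampleAndPermute} with those of \textsc{Simulate}: for a fixed coin vector $\bt_2$ (which fixes $\gamma'=\gamma'_{\bt_2}$, $\sfM_{1p}$ and $\sfM_{2p}$), running \textsc{Simulate} on the full graph at threshold $\gamma'_{\bt_2}$ with the same coins yields $\sfM_{1s}=\sfM_{1p}$ and $\sfM_{2s}=\sfM_{2p}$, so Lemma~\ref{lem:KP}, applied at a threshold held fixed while the coins vary, gives for each fixed threshold $\bbE\{v(\sfM_{2s})\}=\bbE\{v(\sfM_{1s})\}=\bbE\{v(\sfM_1)\}$. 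It remains to pass from a fixed threshold to the random $\gamma'$ appearing inside $\bbE\{\gamma'\,v(\sfM_{2p})\}$: here I would invoke Assumption~\ref{ass:1}, under which $v(\sfM_1)$ is a sum of $\Theta(|L|)$ terms each of size $o(1)$ relative to $v(\mathsf{OPT}(C))$ and therefore concentrates, so that $\gamma'\approx C/v(\sfM_1)$ is $(1+o(1))$-close to a deterministic value; pulling this $\gamma'$ out of the expectation and using the fixed-threshold identity closes the chain $\bbE\{\gamma'\,v(\sfM_{\mathsf{ON-T}})\}\le\bbE\{\gamma'\,v(\sfM_{2p})\}\le(1+o(1))\,\bbE\{\gamma'\,v(\sfM_1)\}\le(1+o(1))\,C$.

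The step I expect to be the main obstacle is exactly this passage from a fixed threshold to the random one. The threshold $\gamma'$ is determined by which left vertices land offline and is \emph{positively correlated} with $v(\sfM_{2p})-v(\sfM_1)$ — a split leaving the offline phase with low-value vertices simultaneously inflates $\gamma'$ and leaves the high-value vertices for the decision phase — so the fixed-threshold identity $\bbE\{v(\sfM_{2p})\}=\bbE\{v(\sfM_1)\}$ alone is not enough, and $\bbE\{\gamma'\,v(\sfM_{2p})\}$ could exceed $\bbE\{\gamma'\,v(\sfM_1)\}$ by an $\Theta(1)$ factor if $v(\sfM_1)$ fluctuated like a single dominant term. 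This is precisely where the large-market Assumption~\ref{ass:1} is indispensable: it forces $v(\sfM_1)$, hence $\gamma'$, to be $(1+o(1))$-close to deterministic, after which the pointwise bound $\gamma'\,v(\sfM_1)\le C$ from Remark~\ref{rem:threshold} delivers budget feasibility up to the customary $1+o(1)$ slack. (Monotonicity of greedy and of $\gamma_C$, Lemmas~\ref{lem:monotonegreedymatching}--\ref{lem:monotoneGamma}, enters only to make the coupling with the coin-independent full-graph threshold $\gamma_f$ precise, exactly as in the proof of Lemma~\ref{lem:offguarantee}.)
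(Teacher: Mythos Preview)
The paper's argument is a two–line carbon copy of Lemma~\ref{lem:budfeas}: each accepted left vertex $\ell$ is assigned to a distinct offline–matched right vertex $r$ with $c(\ell)\le\text{cost}(r)$, and $\sum_{r}\text{cost}(r)\le C$ by Remark~\ref{rem:threshold}. You are right that this literally bounds $\sum_{\ell}c(\ell)$ rather than the actual disbursement $\sum_{\ell}p_{\ell}=\gamma'\sum_{\ell}v(e^{\star}_{\ell})$, and the paper does not spell out how the gap between bids and payments is closed; your route is therefore genuinely different from the paper's, not merely a more detailed version of it.

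That said, your proposed fix does not deliver what the lemma asserts. First, the strongest conclusion you reach is $\bbE\{\gamma'\,v(\sfM_{\mathsf{ON-T}})\}\le(1+o(1))\,C$: budget feasibility \emph{in expectation} and only up to a $1+o(1)$ factor, whereas the lemma (like its knapsack analogue) is a deterministic, sample-path statement. Second, the step you yourself flag as the obstacle really is a gap: Lemma~\ref{lem:KP} requires the threshold to be independent of the coin tosses, so you cannot invoke it at the random $\gamma'_{\bt_2}$; the concentration heuristic you offer (``$\gamma'$ is $(1+o(1))$-close to deterministic under Assumption~\ref{ass:1}'') is not a proof, since Assumption~\ref{ass:1} controls $v_{max}/v(\mathsf{OPT})$ and says nothing quantitative about fluctuations of $v(\sfM_{1})$ or of $\gamma'$. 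Third, the direction of the comparison you need is the \emph{opposite} of what the machinery of Lemma~\ref{lem:offguarantee} provides: there the coin-independent threshold $\gamma_{f}\le\gamma'_{\bt_2}$ and Lemma~\ref{lem:monotonegreedymatching} were used to \emph{lower}-bound $v(\sfM_{2s})$; for a payment bound you would need an \emph{upper} bound on $v(\sfM_{2p})$, and the available monotonicity points the wrong way.

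In short, the paper disposes of the lemma by the same one-line bid-versus-cost comparison as in the knapsack case and does not separately control $\gamma'v(e^{\star})$; your expectation-based argument is a different and more ambitious attempt to bound payments directly, but it both proves a weaker statement than the one claimed and leaves the decisive correlation step unresolved.
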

\begin{proof} Similar to Lemma \ref{lem:budfeas}, by enforcing the condition that any left vertex is selected only if $c(\ell) \le \text{cost}(r)$ for some unmatched right vertex that is part of $\sfM_1$, and 
$\sum_{r: e=(\ell, r) \in \sfM_1} \text{cost}(r) \le C$ for the \textsc{Threshold} algorithm by Remark \ref{rem:threshold}.
\end{proof}

Next, we show the most important property of \textsc{ON-truth}, its truthfulness. Towards that end, we will use the Myerson's Theorem \cite{myerson1981optimal}.
\begin{theorem}\cite{myerson1981optimal}\label{Myerson_Theorem}
A reverse auction is truthful if and only if:
\begin{itemize}
\item The selection rule is monotone. If a user $\ell$ wins the auction by bidding $c(\ell)$, it would also win the auction by bidding an amount $c(\ell)'$, where $c(\ell)' < c(\ell)$.
\item Each winner is paid a critical amount. If a winning user submits a bid greater than this critical value, it will not get selected.
\end{itemize}
\end{theorem}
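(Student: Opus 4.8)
The plan is to prove the two implications of the characterization separately, working in the standard single-parameter reverse-auction setting: fix a left vertex $\ell$ together with the reported bids $\mathbf{b}_{-\ell}$ of everyone else, and regard the allocation $x_\ell(b)\in\{0,1\}$ (whether $\ell$ is selected) and the payment $p_\ell(b)$ as functions of $\ell$'s own reported bid $b$. Writing $U(b;c) = p_\ell(b) - c\,x_\ell(b)$ for the utility of a vertex whose true cost is $c$ and which reports $b$, truthfulness is exactly the statement that $U(c;c)\ge U(b;c)$ for every $b$ and every $c$, for each $\ell$ and each $\mathbf{b}_{-\ell}$. I will also invoke the usual voluntary-participation normalization that an unselected vertex is paid nothing, which is what makes the notion ``critical amount'' well defined. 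Throughout I take bids to be generic and ties broken against selection, so the boundary of the winning set of bids can be ignored; I will note where this matters.

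For the forward direction, assume the mechanism is truthful. Taking two true costs $c_1<c_2$ and combining the inequalities $U(c_1;c_1)\ge U(c_2;c_1)$ and $U(c_2;c_2)\ge U(c_1;c_2)$, the payment terms cancel upon adding, leaving $(c_2-c_1)\bigl(x_\ell(c_1)-x_\ell(c_2)\bigr)\ge 0$, hence $x_\ell(c_1)\ge x_\ell(c_2)$: the allocation is non-increasing in the own bid, which is precisely the monotonicity bullet. The same two inequalities, applied to any two reports $b,b'$ with $x_\ell(b)=x_\ell(b')$, force $p_\ell(b)=p_\ell(b')$; together with monotonicity (so the winning set of bids is a down-set $(-\infty,\theta_\ell)$) this shows $p_\ell$ is a constant $P^\star$ on the winning region and, by the normalization, $0$ on the losing region. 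Finally set $\theta_\ell := \sup\{b : x_\ell(b)=1\}$. Truthfulness at a true cost $c$ in the winning region, evaluated against a losing report, gives $P^\star - c\ge 0$ for all such $c$, hence $P^\star\ge\theta_\ell$; truthfulness at a true cost $c'$ just above $\theta_\ell$ (a loser with utility $0$), evaluated against a winning report, gives $0\ge P^\star - c'$ for all such $c'$, hence $P^\star\le\theta_\ell$. Therefore $P^\star=\theta_\ell$: each winner is paid exactly the critical value $\theta_\ell$, which depends only on $\mathbf{b}_{-\ell}$, and by monotonicity a winner that instead bids above $\theta_\ell$ is not selected.

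For the converse, assume the allocation is monotone in the own bid and each winner is paid the critical value $\theta_\ell=\sup\{b:x_\ell(b)=1\}$ while losers are paid $0$. Fix $\ell$ with true cost $c$ and an arbitrary report $b$. If $c<\theta_\ell$, reporting truthfully wins (by monotonicity $c$ lies in the winning region) for utility $\theta_\ell-c\ge 0$; any deviation either still wins, giving the same $\theta_\ell-c$ because the critical payment does not depend on the bid, or loses, giving $0\le\theta_\ell-c$. If $c>\theta_\ell$, reporting truthfully loses with utility $0$, while the only winning reports are bids below $\theta_\ell<c$, each yielding utility $\theta_\ell-c<0$. If $c=\theta_\ell$ the vertex is indifferent. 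In every case truthful reporting is a best response, so the mechanism is truthful. The only real subtlety I expect is the boundary behaviour at $\theta_\ell$ --- whether the supremum is attained, how ties at $b=\theta_\ell$ are broken, and the degenerate extremes where $\ell$ wins for every bid (so $\theta_\ell$ is infinite and the ``critical value'' is vacuous, the payment being any fixed constant) or for no bid ($\ell$ never selected, nothing to pay); I will dispose of these with the generic-bid/tie-breaking convention fixed at the outset and a one-line remark that in the budgeted reverse-auction setting of interest the ``always wins'' case cannot occur. Everything else reduces to the two short inequality manipulations above.
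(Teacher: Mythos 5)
The paper states this characterization as a cited result from \cite{myerson1981optimal} and supplies no proof of its own, so there is nothing to compare against; your argument is the standard and correct proof of Myerson's characterization for single-parameter reverse auctions (the two swapped incentive inequalities giving monotonicity and bid-independence of the winner's payment, the squeeze from both sides of the threshold giving $P^\star=\theta_\ell$, and the case analysis for sufficiency). The only point worth making explicit in a final write-up is that the normalization ``losers are paid zero'' is an assumption rather than a consequence of truthfulness --- without it payments are pinned down only up to a per-agent constant and the ``critical amount'' clause is false as literally stated --- but you already flag this, along with the tie-breaking and degenerate-threshold caveats, so the proof is complete.
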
 
\begin{lemma}\label{lem:icExp}
\textsc{ON-truth} is a truthful online algorithm.
\end{lemma}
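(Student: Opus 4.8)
The plan is to verify the two conditions of Myerson's Theorem (Theorem~\ref{Myerson_Theorem}) for the \textsc{ON-truth} mechanism, treating each left vertex $\ell$ as a strategic agent whose private type is its true cost $c(\ell)$, while the reported edge utilities $v(e)$, $e=(\ell,\cdot)$, are held fixed (the standard single-parameter setting). Observe first that the threshold $\gamma'$ and the \texttt{reward}/\texttt{cost} labels on $R$ are computed entirely from the offline set $L'$, and whether a given decision-phase vertex $\ell$ lands in $L'$ is determined by an independent coin flip, not by its bid; so from $\ell$'s perspective $\gamma'$ and the labels are exogenous. Conditioning on the event that $\ell$ is in the decision phase and on the arrival order, the selection of $\ell$ is then a deterministic function of $c(\ell)$ alone, and proving truthfulness reduces to the two bullet points of Theorem~\ref{Myerson_Theorem}.

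For \textbf{monotonicity}: suppose $\ell$ wins with bid $c(\ell)$, via edge $e^\star=(\ell,r)$, meaning (i) $b(e^\star)=c(\ell)/v(e^\star)\le\gamma'$ so $e^\star$ survives pruning, (ii) $v(e^\star)\ge\text{reward}(r)$ and $c(\ell)\le\text{cost}(r)$, and (iii) $r$ is still free when $\ell$ arrives. Now lower the bid to $c(\ell)'<c(\ell)$. The pruning step removes edges with $b(e)>\gamma'$, i.e. with $v(e)<c(\ell)'/\gamma'$; since $c(\ell)'<c(\ell)$ this threshold on $v(e)$ only decreases, so every edge of $\ell$ that survived before still survives, and in particular $e^\star$ does. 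Conditions $v(e^\star)\ge\text{reward}(r)$ and $c(\ell)'\le\text{cost}(r)$ still hold (the value side is unchanged and $c(\ell)'<c(\ell)\le\text{cost}(r)$). The only subtlety is that lowering the bid enlarges the set of candidate edges on Line~17, so the algorithm might now pick a \emph{different} edge $e^{\star\star}=(\ell,r')$ with $v(e^{\star\star})>v(e^\star)$; but then $\ell$ is still matched --- it is matched to some edge whenever at least one admissible edge exists and the corresponding right vertex is free. One must check that the chosen right vertex is free: since the availability of right vertices in the decision phase depends only on which edges were selected for vertices arriving before $\ell$, and those selections do not depend on $c(\ell)$, the set of free right vertices when $\ell$ arrives is unchanged; and $e^{\star\star}$, being the largest-value admissible edge, is chosen only if $r'$ is free. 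Hence $\ell$ still wins. This gives monotonicity.

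For the \textbf{critical payment}: fix the (bid-independent) quantities $\gamma'$ and the labels, the arrival order, and the set of right vertices free when $\ell$ arrives. As $c(\ell)$ increases, the admissible edge set of $\ell$ shrinks monotonically (pruning removes an edge $e$ once $c(\ell)>\gamma' v(e)$, and Line~17's constraint $c(\ell)\le\text{cost}(r)$ also tightens). Let $c^\star$ be the supremum of bids at which $\ell$ is still matched; by monotonicity $\ell$ wins for all $c(\ell)\le c^\star$ and loses for $c(\ell)>c^\star$, so $c^\star$ is the critical value. I will argue that the payment $p_\ell=\gamma' v(e^\star)$ equals $c^\star$. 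Indeed, when $\ell$ wins with edge $e^\star=(\ell,r)$, raising the bid keeps $e^\star$ admissible (and thus keeps $\ell$ winning, possibly via $e^\star$ itself) as long as both $c(\ell)\le\gamma' v(e^\star)$ and $c(\ell)\le\text{cost}(r)$ hold. Since $e^\star$ is selected, $v(e^\star)\ge\text{reward}(r)$; and one needs that $\gamma' v(e^\star)\le\text{cost}(r)$ cannot bind before the pruning bound --- this follows because $r\in R'$ carries $\text{cost}(r)=c(\ell_r)$ for the offline vertex $\ell_r$ matched to it in $\sfM_1$, which satisfies $c(\ell_r)=b(\ell_r)v(\ell_r)\le\gamma' v(\ell_r)\le\gamma' v(e^\star)$ is \emph{not} automatic, so the critical value is $c^\star=\min\{\gamma' v(e^\star),\ \text{cost}(r),\ \text{(bids unlocking a higher-value admissible edge)}\}$; I will need to show the binding term is exactly $\gamma' v(e^\star)$. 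The cleanest route is: \emph{(a)} at bid $c(\ell)=\gamma' v(e^\star)$ the edge $e^\star$ still has $b(e^\star)=\gamma'\le\gamma'$ and $\ell$ is still matched, so $c^\star\ge\gamma' v(e^\star)=p_\ell$; \emph{(b)} for the reverse inequality $p_\ell\ge c(\ell)$ we already have $p_\ell=\gamma' v(e^\star)\ge \gamma' v(\ell)\ge c(\ell)$ from the main theorem, so it suffices that \textsc{ON-truth} never pays more than the critical value, which holds because raising the bid above $\gamma' v(e^\star)$ prunes $e^\star$ and all lower-value edges of $\ell$, and any surviving higher-value edge $e$ of $\ell$ would have been the one selected at the \emph{original} bid (Line~17 picks the largest-value admissible edge), contradicting that $e^\star$ was chosen --- hence no admissible edge survives and $\ell$ loses, giving $c^\star\le\gamma' v(e^\star)=p_\ell$. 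Combining (a) and (b), $p_\ell=c^\star$, which is exactly the critical-payment condition. By Myerson's Theorem, \textsc{ON-truth} is truthful.

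\textbf{Main obstacle.} The delicate point is step (b): I must rule out the scenario where raising the bid past $\gamma' v(e^\star)$ still leaves $\ell$ matched via some higher-value edge $e$ with $\gamma' v(e)>\gamma' v(e^\star)$. The resolution is precisely the ``largest value'' tie-breaking rule on Line~17: such an $e$ would have been admissible and of strictly larger value already at the lower bid $c(\ell)$, so the algorithm would have chosen $e$ rather than $e^\star$ --- contradiction. Making this argument airtight requires carefully noting that the \emph{only} effect of changing $c(\ell)$ on $\ell$'s admissible set is through the two inequalities $b(e)\le\gamma'$ and $c(\ell)\le\text{cost}(r)$, both of which are monotone in $c(\ell)$, and that $\gamma'$ and the right-vertex availabilities are frozen; I would state this as an explicit sub-claim before invoking Myerson.
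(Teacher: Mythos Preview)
Your approach --- verify Myerson's two conditions --- is exactly the paper's, and your step~(b) is essentially the paper's entire criticality argument (the paper only shows that bidding above $p_\ell=\gamma' v(e^\star)$ prunes every relevant edge of $\ell$; it does not argue the reverse direction and never mentions the $c(\ell)\le\text{cost}(r)$ clause in either Myerson condition). So you are at least as thorough as the paper.

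That said, the extra care you take exposes a genuine gap that you do not close. In step~(a) you assert that at bid $c(\ell)=\gamma' v(e^\star)$ the vertex $\ell$ still wins via $e^\star$, which requires $\gamma' v(e^\star)\le\text{cost}(r)$. But since $r$ was matched in $\sfM_1$ to some offline $\ell_r$ with $b(\ell_r)\le\gamma'$, one has $\text{cost}(r)=c(\ell_r)\le\gamma'\,v(\ell_r,r)=\gamma'\,\text{reward}(r)\le\gamma' v(e^\star)$ --- the inequality points the \emph{wrong} way, so the cost clause can bind strictly before the pruning threshold and the true critical bid need not equal $p_\ell$. The same clause also undermines your monotonicity paragraph: lowering the bid can newly satisfy $c(\ell)'\le\text{cost}(r')$ for a \emph{higher}-value edge $(\ell,r')$, whereupon Line~17 outputs $(\ell,r')$ instead of $(\ell,r)$; but Line~17 does not restrict to free right vertices (that check is only Line~18), so if $r'$ is already taken in $\sfM_{\mathsf{ON-T}}$, $\ell$ goes from matched to unmatched. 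Your sentence ``$e^{\star\star}$, being the largest-value admissible edge, is chosen only if $r'$ is free'' misreads the algorithm. These are real obstructions; the paper's terse proof simply does not engage with them.
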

\begin{proof} We show that the two conditions of Theorem \ref{Myerson_Theorem} are satisfied for the \textsc{ON-truth} algorithm, similar to \cite{VazeMatching2016}.  In the decision phase, if any left vertex reduces its bid, then clearly its buck per bang $b(e)$ decreases, and hence it is still accepted if it was accepted before. Thus, monotone condition is satisfied.

The criticality of payment is shown as follows. Note that the payment made by \textsc{ON-truth} to a selected left vertex $\ell$ is $p_{\ell} = \gamma' v(e), e =(\ell,r) \in \sfM_{\mathsf{ON-T}}$, where the right vertex index $r$ is such that utility $v(e), e=(\ell, r)$ is largest among the unmatched right vertices at the time of arrival of vertex $\ell$ that have an edge to left vertex  $\ell$, and $v(e) > \text{reward}(r)$. 

Now, if suppose the bid $c(\ell)$ of left vertex $\ell$ is more than $p_{\ell}=\gamma' v(e)$, then its bang per  buck $c(\ell)/v(e) >\gamma'$. Moreover, since $v(e) > v(e')$ for all edges $e'$ incident on unmatched right vertices from $\ell$ at the arrival of left vertex $\ell$, we have that $c(\ell)/v(e') >  \gamma'$. 
Thus, all edges out of left vertex $\ell$ incident on currently unmatched right vertices are removed in the pruning stage of the decision phase, and hence vertex $\ell$ cannot be selected.
\end{proof}

{\it Discussion:} In this section, we proposed a $24$-competitive online algorithm for the budgeted bipartite matching problem that is also truthful, improving upon the best known bound of $24 \beta$-competitive \cite{VazeMatching2016}, where $\beta$ is the ratio of the maximum and minimum utility of any edge. The proposed algorithm is a significant/fundamental improvement over prior work, since it eliminates any dependence on the system/input parameters, making it scalable and suitable for large networks.

%
%
\section{Simulation }
We consider the uplink of a single cell of cellular communication for the D2D application, where $150$ cellulars users are present with one basestation. Out of $150$ nodes, the helper set is of size $n=50$, while the rest $100$ nodes (set $R$) are seeking help. The payment budget constraint is $100$. All users are assumed to be uniformly located in the coverage area, and the utility between any helper and a  help seeking node is drawn uniformly from $[0,20]$, and the bid for each helper is drawn uniformly from $[0,5]$. Let $\delta$ be the fraction of nodes any one helper can help, and we assume that for fixed $\delta$, the nodes that any helper can help are uniformly distributed among the $100$ nodes. In Fig. \ref{fig:cr}, we plot the competitive ratio of the  proposed algorithm \textsc{On-truth} as a function of $\delta$. We see that the competitive ratio of \textsc{On-truth}  algorithm is far better than the derived guarantee ($24$-competitive). An important observation 
from Fig. \ref{fig:cr} is that as $\delta$ increases, the competitive ratio increases significantly, since with larger $\delta$, the quality of the offline matching and the number of right vertices matched in the offline phase increases, allowing the \textsc{On-truth} algorithm to match larger number of left vertices, and extract larger utility.

%

\begin{figure} \label{fig:cr}
\begin{tikzpicture}[xscale=.85, yscale=.45] 
\draw[help lines] (0,0) grid (9,10);,
\draw [ultra thick, dashed](0,0) node [left] {$0$} (0,1) node [left] {$.03$} (0,2) node [left]  {$.06$} (0,3) node [left]  {$.09$} 
(0,4) node [left]  {$.12$}  (0,5) node [left]  {$.15$} (0,6) node [left]  {$.18$} (0,7) node [left]  {$.21$} (0,8) node [left]  {$.24$} (0,9) node [left]  {$.27$} (0,1) node [left]  {$1$};
\draw (4.5,-1) node [below] {$\text{Mean out-degree of incoming left vertex} \ \delta$};

\draw (-1.5,5) node [below, rotate=90] {$\text{Competitive Ratio}$};

\draw [ultra thick, dashed]  (1,0) node [below]  {$.2$} (2,0) node [below]  {$.3$} 
(3,0) node [below]  {$.4$}  (4,0) node [below]  {$.5$} (5,0) node [below]  {$.6$} (6,0) node [below]  {$.7$} (7,0) node [below]  {$.8$} (8,0) node [below]  {$.9$};

\draw [blue, line width=6]
(1,0) -- (1, 4) (2,0) -- (2, 5.2) (3,0) -- (3, 6.1) (3,0) -- (3, 6.8) (4,0) -- (4, 7.5) (5,0) -- (5, 8.01) (6,0) -- (6, 8.7) (7,0) -- (7, 8.9) (8,0) -- (8, 9); 

\end{tikzpicture}
\caption{Competitive ratio of the \textsc{On-truth} algorithm as a function of $\delta$.}
\end{figure}
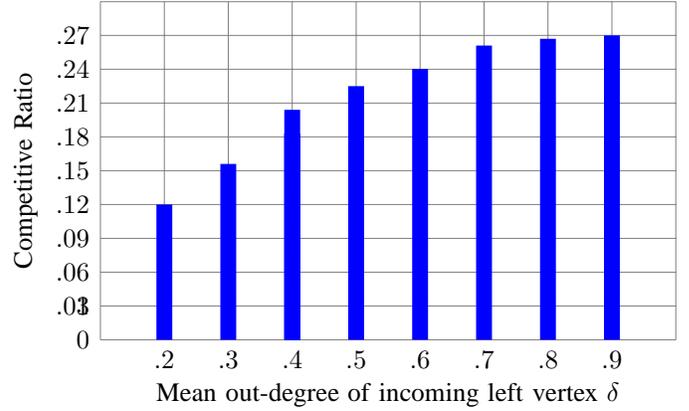

\section{Conclusions}
In this paper, we have made significant progress in finding  better online algorithms for bipartite matching under the capacity constraint on the 'size' of selected left vertices. Under the large market assumption, that is reasonable in practice, we are able to improve the best known competitive ratio for the knapsack problem from $10e$ to $2e$,  and from non-constant to constant for the truthful matching problem. 

%
%
\appendices

\section{Proof of Lemma \ref{lem:umguarantee}}
\begin{proof} 
Decompose the optimal fractional matching solution $\mathsf{OPT} = \{\mathsf{OPT}^{+} \cup \mathsf{OPT}^{-}\}$, where $\mathsf{OPT}^{+}$ contains edges of $\mathsf{OPT}$ that have $b(e) > \gamma_C$, and $\mathsf{OPT}^{-}$ contains edges of $\mathsf{OPT}$ that have $b(e) \le \gamma_C$. Similarly, let $\mathsf{OPT}(\gamma_C)$ be the optimal fractional matching on subgraph $G(\gamma_C)\subseteq G$, where $\gamma_C$ is the output threshold from the \textsc{Threshold} algorithm with graph $G$.  By definition of optimal matching, $v(\mathsf{OPT}^{-}) \le v(\mathsf{OPT}(\gamma_C))$. Moreover, for $\sfM$, the output matching from \textsc{Threshold} algorithm with graph $G$, we have 
$v(\sfM) \ge \frac{v(\mathsf{OPT}(\gamma_C))}{2}$, since $\sfM$ is a greedy matching on $G(\gamma)$ (subgraph with all edges having $b(e) \le \gamma_C$). Therefore, 
$v(\sfM) \ge \frac{v(\mathsf{OPT}^{-})}{2}$.

All edges $e = (\ell ,r) \in\mathsf{OPT}^{+}$, have $b(e) = \frac{c(\ell)}{v(e)} > \gamma_C$. Thus, 
$v(\mathsf{OPT}^{+}) = \sum_{e =(\ell, r)   \in \mathsf{OPT}^{+}} x(\ell) v(e) < \frac{\sum_{e = (\ell ,r) \in \mathsf{OPT}^{+}}x(\ell) c(\ell)}{\gamma_C}$, where $x(\ell)$ are fractional weights in the optimal solution. 
Moreover, the total budget constraint of $C$ ($\sum_{e = (\ell ,r) \in \mathsf{OPT}} x(\ell)c(\ell) \le C$) implies that $v(\mathsf{OPT}^{+}) < \frac{C}{\gamma_C}$. Assuming that the budget constraint is tight with the \textsc{Threshold} algorithm ($\sum_{e\in \sfM}\gamma_C v(e)= C$), $v(\sfM) = \frac{c}{\gamma_C}$. Therefore, 
$v(\mathsf{OPT}^{+}) < v(\sfM)$. Combining this with $v(\sfM) \ge \frac{v(\mathsf{OPT}^{-})}{2}$, we have $v(\sfM) \le 3 v(\mathsf{OPT})$ as required.

If the capacity constraint is not tight with the \textsc{Threshold} algorithm, then under Assumption \ref{ass:1} ($v_{max}/v(\mathsf{OPT})=o(1)$), by the definition of \textsc{Threshold} algorithm that finds the largest feasible $\gamma$, the leftover capacity $C-\sum_{e\in \sfM}\gamma_C v(e)$ is no more than $\gamma_C v_{max}$, and similar argument gives us that $v(\mathsf{OPT}) \le (3+o(1)) v(\sfM)$. 
\end{proof}

\section{Proof of Corollary \ref{cor:umguarantee}}
\begin{proof} 
Compared to the proof of Lemma \ref{lem:umguarantee}, the only difference is in contribution from edges with $b(e) \le \gamma_C$. Since all edge weights incident on any left vertex are identical and the number of right vertices is equal to left vertices, greedy matching $\sfM$ is actually optimal for edges with $b(e) \le \gamma_C$. Hence $v(\sfM) =  v(\mathsf{OPT}(\gamma_C))$. 
Noting that  $v(\mathsf{OPT}(\gamma_C)) \ge
v(\mathsf{OPT}^{-})$, we get 
$v(\sfM) \ge
v(\mathsf{OPT}^{-})$, and from which the result follows, since $v(\sfM) \ge
v(\mathsf{OPT}^{+})$ ( proof of Lemma \ref{lem:umguarantee}).
\end{proof}

\section{Proof of Lemma \ref{lem:monotonegreedymatching}}
\begin{proof} 
When a left vertex is removed (by deleting all edges incident to it as considered), the proof of claim $1$ follows by standard procedure by considering each right vertex, for which the value of the matched edge in $\text{\textsc{Greedy}}(G)$ is at least as much as in  $\text{\textsc{Greedy}}(F)$. 

For the second  and third claim, note that an edge $e$ incident on left vertex $\ell$ is removed in $G(\gamma)$ compared to $G$, if 
$b(e) > \gamma$ or equivalently if $v(e) < \frac{c(\ell)}{\gamma}$. 
Recall that the cost of any edge only depends on the index of its left vertex. Hence, if edge $e =(\ell, r)$ is removed from $G$ to obtain $G(\gamma)$, 
then all the edges $e'$ incident on $\ell$ with utility $v(e') < v(e) $ are also removed. So essentially, edges are removed monotonically from $G$ to produce $G(\gamma)$. So the proofs for the second and third claim follow similarly to the first.
\end{proof}

\section{Proof of Lemma \ref{lem:polytimecomplexity}}
\begin{proof} From the definition of  Algorithm \textsc{Threshold} its clear that if any $\gamma \in \cA(G)$, then $\gamma_C \ge \gamma$. Hence the key step is to show that if any $\gamma \notin \cA(G)$, then $\gamma_C < \gamma$ which follows from the second claim of Lemma \ref{lem:monotonegreedymatching}, that $v(\textsc{Greedy}(G(\gamma_1))) \ge v(\textsc{Greedy}(G(\gamma_2)))$ for $\gamma_1 \ge \gamma_2$. Therefore, if for any $\gamma \notin \cA(G)$, then for any $\gamma' > \gamma$, $\gamma' \notin \cA(G)$. Hence we can use bisection to find the maximum.
\end{proof}

\section{Proof of Lemma \ref{lem:monotoneGamma}}
\begin{proof}
From Lemma \ref{lem:monotonegreedymatching}, \begin{equation}\label{eq:FgGg}
v(\sfM(F(\gamma))) \le v(\text{\textsc{Greedy}}(G(\gamma))).
\end{equation}
Let the threshold and the matching obtained by running \textsc{Threshold} on $G$ with budget $C$ be $\gamma_C(G) = \gamma$, and $\sfM(G)$, respectively, where $\gamma \le \frac{C}{v(\text{\textsc{Greedy}}(G(\gamma)))}$. Now we consider $F(\gamma)$ as the input graph to the \textsc{Threshold} with same budget constraint $C$. Since $\gamma \le \frac{C}{v(\text{\textsc{Greedy}}(G(\gamma)))}$, from \eqref{eq:FgGg}, clearly, 
$\gamma \le \frac{C}{v(\sfM(F(\gamma))}$, and $\sum_{e\in \sfM(F(\gamma))}\gamma v(e)\leq C$. Therefore,  $\gamma \in \cA(F)$, which by definition of $\gamma_C(F)$ implies $\gamma_C(F) \ge \gamma$.

\end{proof}
\bibliographystyle{IEEEtran}
\bibliography{onlined2d}


\end{document}